\newcommand{\Ot}{\tilde{O}}
\newcommand{\color}{\mbox{color}}
\newtheorem{theorem}{Theorem}
\newtheorem{lemma}{Lemma}
\newtheorem{corollary}{Corollary}
\newtheorem{claim}[theorem]{Claim}
\def \eps {\varepsilon}
\def \poly {\textrm{~poly}}
\newcommand{\etal}{{\em et al.\ }}
\newenvironment{reminder}[1]{\smallskip
\noindent {\bf Reminder of #1  }\em}{}
\title{Minimum Weight Cycles and Triangles:\\ Equivalences and Algorithms}
\author{
Liam Roditty \\ Bar Ilan University \and
Virginia Vassilevska Williams  \\ UC Berkeley
}
\begin{document}

\date{}
\maketitle
\thispagestyle{empty}

\begin{abstract}
We consider the fundamental algorithmic problem of finding a cycle of minimum weight in a weighted graph.
In particular, we show that the minimum weight cycle problem in an undirected $n$-node graph with edge weights in $\{1,\ldots,M\}$ or in a directed
 $n$-node graph with edge weights in $\{-M,\ldots , M\}$ and no negative cycles can be efficiently reduced to finding a minimum weight {\em triangle} in an $\Theta(n)-$node \emph{undirected} graph with weights in $\{1,\ldots,O(M)\}$. Roughly speaking, our reductions imply the following surprising phenomenon: a minimum cycle with an arbitrary number of weighted edges can be ``encoded'' using only \emph{three} edges within roughly the same weight interval!

This resolves a longstanding open problem posed in a seminal work by Itai and Rodeh [SIAM J. Computing 1978 and STOC'77] on minimum cycle in unweighted graphs.

A direct consequence of our efficient reductions are  $\tilde{O}(Mn^{\omega})\leq \tilde{O}(Mn^{2.376})$-time algorithms using fast matrix multiplication (FMM) for finding a minimum weight cycle in both undirected graphs with integral weights from the interval $[1,M]$ and directed graphs with integral weights from the interval $[-M,M]$. The latter seems to reveal a strong separation between the all pairs shortest paths (APSP) problem and the minimum weight cycle problem in directed graphs as the fastest known APSP algorithm has a running time of $O(M^{0.681}n^{2.575})$ by Zwick [J. ACM 2002].

In contrast, when only combinatorial algorithms are allowed (that is, without FMM) the only known solution to minimum weight cycle is by computing APSP. Interestingly, any separation between the two problems in this case would be an amazing breakthrough
as by a recent paper by Vassilevska W. and Williams [FOCS'10], any $O(n^{3-\eps})$-time algorithm ($\eps>0$) for minimum weight cycle immediately implies a $O(n^{3-\delta})$-time algorithm ($\delta>0$) for APSP.

%
%
\end{abstract}

\newpage
\setcounter{page}{1}

\section{Introduction}

We consider the algorithmic problem of finding a minimum weight cycle (i.e., weighted girth) in weighted directed and undirected graphs. Surprisingly, although the problem is very fundamental, the state of the art for it dates back to a seminal paper by Itai and Rodeh~\cite{Clique1}, first presented in STOC'77, that deals only with the \emph{unweighted} variant of the problem. Itai and Rodeh presented an $O(n^\omega)$-time algorithm for an $n$-node unweighted undirected graph and an $O(n^\omega \log n)$-time algorithm for an $n$-node unweighted directed graph.
(Here $\omega$ is the exponent of square matrix multiplication over a ring, and $\omega<2.376$~\cite{cw90}.)
In the same paper, Itai and Rodeh posed the question whether similar results exist for weighted graphs. In this paper we provide a positive answer to this longstanding open problem by presenting $\tilde{O}(Mn^\omega)$-time algorithms for directed graphs with integral edge weights in $[-M,M]$ (and no negative cycles) and for undirected graphs with integral edge weights in $[1,M]$.

Our algorithmic results are obtained using new reductions that carefully  combine new algorithmic ideas and special combinatorial properties of the minimum weight cycle. More specifically, we reduce the problem to the problem of finding a minimum weight {\em triangle} in a $\Theta(n)-$node \emph{undirected} graph with weights in $\{1,\ldots,O(M)\}$. This reveals also a surprising phenomenon: a minimum cycle with an arbitrary number of weighted edges can be efficiently ``encoded'' using a cycle of only \emph{three} edges whose weights are roughly within the same interval! Moreover, our results imply a strong \emph{equivalence} between the cycle and triangle problems.

\paragraph{Minimum cycle and APSP.}
Recently, in FOCS'10 Vassilevska W. and Williams~\cite{focsy} showed that the minimum weight cycle problem is equivalent to many other graph and matrix problems for which no truly subcubic ($O(n^{3-\eps})$-time for constant $\eps>0$) algorithms are known. They showed that if there is a truly subcubic algorithm for the minimum weight cycle problem, then many other problems such as the all-pairs-shortest-paths (APSP) problem also have truly subcubic algorithms. Hence, the minimum weight cycle problem has a pivotal role in understanding the complexity of many fundamental polynomial problems in a similar spirit to the role of 3SAT for NP-hard problems.

It is interesting to compare the minimum cycle problem with APSP. In directed graphs, the minimum weight cycle can be computed easily by computing APSP. Given the distance $d[u,v]$ between all pairs of vertices $u,v$, the weight of the minimum cycle is
$\min_{u,v} w(u,v)+d[v,u]$. Hence, we can compute the minimum weight cycle in cubic time using Floyd-Warshall's APSP algorithm~\cite{floyd,warshall} (or Pettie's $O(mn+ n^2\log \log n)$ time algorithm~\cite{Pettie04} if the graph is sparse). If the edge weights are integers in $[-M,M]$, we can use Zwick's~\cite{zwickbridge} $O(M^{0.681}n^{2.575})$ time algorithm to obtain an algorithm for minimum cycle with the same runtime.
Improving Zwick's running time and in particular obtaining an $\tilde{O}(Mn^\omega)$ running time for APSP in directed graphs
 is one of today's frontier questions in graph algorithms.
 Our new $\tilde{O}(Mn^\omega)$-time algorithm for minimum cycle in directed graphs shows that it is not really necessary to compute all pairs shortest paths in order to compute the minimum weight cycle in directed graphs. This
  seems to reveal a strong separation between APSP and the minimum cycle problem in directed graphs.

The minimum cycle problem in undirected graphs differs from the problem in directed graphs in that the reduction to APSP no longer works: an edge $(u,v)$ might also be the shortest path from $v$ to $u$, and  $\min_{u,v} w(u,v)+d[v,u]$ might be $2w(u,v)$ and not the weighted girth of the graph. This represents a nontrivial hurdle. Nevertheless, in this paper we show how to overcome this hurdle and obtain an $\tilde{O}(Mn^\omega)$ time algorithm for undirected graphs with integer weights in $[1,M]$. This matches the runtime of the best APSP algorithm in such graphs: In a paper first presented at STOC'92, Seidel~\cite{Seidel} showed that APSP in undirected and unweighted $n$-node graphs can be solved in $\tilde{O}(n^\omega)$ time. In FOCS'99, Shoshan and Zwick~\cite{sz99} (following Galil and Margalit~\cite{GM97}) showed that APSP in undirected $n$-node graphs with integer edge weights in $[0,M]$ can be solved in $\tilde{O}(Mn^\omega)$ time, thus extending Seidel's running time to weighted undirected graphs.

%

\paragraph{Our results: reductions and algorithms.}
We develop our algorithms by first obtaining extremely efficient reductions from the minimum weight cycle problem to the minimum weight triangle problem which preserve the interval in which the weights lie, within a constant factor.


\noindent \emph{Undirected graphs.}
For undirected graphs our results are as follows.
\begin{theorem}
Let $G(V,E,w)$ be an undirected graph with $w:E\rightarrow \{1,\ldots, M\}$ and let $C$ be a minimum cycle in $G$.
There is an $O(n^2 (\log nM) \log n)$ time deterministic algorithm that computes a cycle $\hat{C}$ and
constructs $O(\log n)$ graphs $G'_1,\ldots,G'_k$ on $\Theta(n)$ nodes and edge weights in $\{1,\ldots,O(M)\}$
such that either $w(\hat{C})=w(C)$ or the minimum out of all weights of triangles in the graphs $G'_i$
  is exactly $w(C)$.
%
\label{thm:undir}
\end{theorem}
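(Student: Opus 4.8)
The plan is to guess the number of edges $g$ of the unknown minimum cycle $C$ up to a factor of two, producing one graph $G'_i$ for each dyadic value $g\in\{1,2,4,\dots,2^{\lceil\log_2 n\rceil}\}$; this accounts for the $O(\log n)$ graphs, and the guess with $g\le |E(C)|<2g$ is the one that should ``catch'' $C$. For a fixed guess I would fix three vertices $x,y,z$ on $C$ that cut it into three arcs, each with fewer than $g$ edges, which is always possible. The key combinatorial input is that in a minimum weight cycle any arc of weight strictly below $w(C)/2$ must be a shortest path between its endpoints: otherwise, replacing that arc by a strictly lighter path and extracting a cycle from the resulting closed walk would beat $w(C)$ --- and this is exactly the place where one must argue that the closed walk really does contain a genuine cycle, not merely a degenerate back-and-forth along one edge. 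With a lexicographic tie-break on the weights one may further assume these shortest paths are unique, which pins down the three arcs.

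The reduction then lets $G'_i$ consist of three disjoint copies $V^{(1)},V^{(2)},V^{(3)}$ of $V(G)$ --- hence $\Theta(n)$ vertices --- with an edge of weight $d_G(u,v)$ joining $u\in V^{(j)}$ to $v\in V^{((j\bmod 3)+1)}$ whenever a shortest $u$--$v$ path is short enough (few hops) and light enough. On the one hand, the three arcs of $C$ yield a triangle of $G'_i$ of weight exactly $w(C)$. On the other hand, any triangle of $G'_i$ spells out a closed walk in $G$ of equal weight whose three pieces are shortest paths, so it contains a cycle of at most that weight; taking the minimum triangle weight over all the $G'_i$, together with the separately and cheaply computed candidate cycle $\hat{C}$, therefore yields $w(C)$.

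Two points need genuine care, and both are where the undirected case is harder than the directed one. First, a ``triangle'' may trace a closed walk that runs a single edge $\{u,v\}$ out and back, or that shortcuts through a repeated vertex into a strictly lighter cycle than the triangle's weight; in a directed graph neither can happen, so here the copies $V^{(j)}$ and the set of admissible edges must be arranged so that such configurations are excluded or cannot beat a true cycle, and the ``arcs are shortest paths'' property must be exploited in both directions of the correspondence. Second --- and this is the main obstacle --- the constructed weights must lie in $\{1,\dots,O(M)\}$, whereas a shortest path on $\Theta(g)$ hops can weigh $\Theta(gM)$. One therefore cannot encode long arcs as single edges, so the triangle reduction can only be responsible for minimum cycles all of whose thirds are light, and the remaining ``heavy'' minimum cycles must instead be produced directly as $\hat{C}$. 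Thus the real content of the theorem is a case split that tiles the whole problem: to show that every minimum cycle is either caught weight-exactly by some $G'_i$ or is already found by $\hat{C}$, where $\hat{C}$ comes from a direct subroutine running in $O(\log n)$ rounds --- one per hop scale --- each performing $O(\log(nM))$ scaling/shortest-path computations of $O(n^2)$ cost on $\Theta(n)$-size structures, for the claimed $O(n^2(\log nM)\log n)$ total. I expect essentially all of the effort to go into making this two-regime split airtight and into the degeneracy analysis above; the stated running time then follows by bookkeeping.
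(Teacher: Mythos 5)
There is a genuine gap, and it sits exactly where you yourself locate ``the main obstacle.'' Your plan encodes all three arcs of the minimum cycle as distance-weighted edges, so every side of the intended triangle can weigh $\Theta(nM)$, and you defer this by declaring that minimum cycles with a heavy third are instead ``produced directly as $\hat{C}$'' by some $\tilde{O}(n^2)$ subroutine. But no such subroutine is available: finding a heavy minimum cycle exactly \emph{is} the problem, and the hardest instances may have every cycle heavy. In the paper the fallback $\hat{C}$ has a far narrower job: the Dijkstra-like routine Min-Cycle$(G,t)$ either reports a cycle of weight at most $2t$ or computes all distances at most $t$ (Lemma~\ref{L-bin-search}); a binary search over the \emph{weight} threshold $t$ (not over hop counts) finds the largest $t$ with no cycle reported, and then $\hat{C}$, of weight at most $2t+2$, is optimal precisely in the boundary case $w(C)\geq 2t+2$, while every cycle with $w(C)\leq 2t+1$ must be caught by the triangle instances. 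The decomposition used there is not three arcs but the critical-edge lemma (Lemma~\ref{lemma:middle}, sharpened in Lemma~\ref{lemma:tbound}): one single original edge $(v_i,v_{i+1})$ of $C$ plus two arcs that are shortest paths whose weights lie in the window $(t-M,\,t]$. That asymmetric ``two shortest paths $+$ one graph edge'' structure is exactly what allows discarding all distance edges outside $(t-M,t]$ and subtracting $t$, compressing every weight to $O(M)$ while preserving the minimum triangle. Your symmetric split into three arcs of roughly equal \emph{hop count} yields no such weight window (one arc may carry almost all of $w(C)$), so no single shift brings the weights into $O(M)$; worse, an arc of weight above $w(C)/2$ need not be a shortest path, so even the forward direction --- that $C$ appears as a triangle of weight exactly $w(C)$ in some $G'_i$ --- is not secured by cutting at hop-count thirds.

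The degeneracy issue you flag also needs an actual mechanism, and a lexicographic tie-break does not supply one: the bad configuration is two shortest paths from a common source $x$ to adjacent vertices $y,z$ where the path to $z$ ends with the edge $(y,z)$; this creates a triangle of weight $2d(x,y)+2w(y,z)$, possibly below the girth, even when all shortest paths are unique. The paper shows (Lemma~\ref{lemma:twopaths}) that in undirected graphs this is the \emph{only} bad case, and eliminates it by $2$-coloring the nodes and admitting a distance edge $(x^1,z^2)$ only when the predecessor of $z$ on the path gets color $C_1$ and $z$ gets $C_2$; derandomizing with a $2$-perfect hash family of size $O(\log n)$ is what produces the $O(\log n)$ graphs in the theorem --- not dyadic guessing of the cycle's hop count, which plays no role in the paper's argument. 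Without replacements for both ingredients --- the $O(M)$-wide weight window coming from the critical-edge lemma together with the binary search on $t$, and a color-coding-style exclusion of false triangles --- the proposed reduction does not establish the theorem.
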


Since a minimum weight triangle in a graph with weights bounded by $O(M)$ can be found via a single distance product computation in $\tilde{O}(Mn^\omega)$ time~\cite{agm97,Yuval}, we obtain the following corollary.

\begin{corollary}
A minimum weight cycle in an $n$-node undirected graph with integer edge weights in $[1,M]$ can be found in $\tilde{O}(Mn^\omega)$ time.
\end{corollary}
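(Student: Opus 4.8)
\medskip
\noindent\textbf{Proof plan.}
Granting Theorem~\ref{thm:undir}, the corollary is almost immediate. First run the reduction of the theorem on $G$: in $O(n^2(\log nM)\log n)=\tilde{O}(n^2)$ deterministic time it produces a cycle $\hat C$ together with $k=O(\log n)$ undirected graphs $G'_1,\ldots,G'_k$, each on $\Theta(n)$ vertices and with integer weights in $\{1,\ldots,O(M)\}$, such that $w(\hat C)=w(C)$ or $\min_i\tau_i=w(C)$, where $\tau_i$ denotes the weight of a minimum-weight triangle in $G'_i$ and $w(C)$ is the weighted girth of $G$. Since $\hat C$ is a cycle of $G$ we have $w(\hat C)\ge w(C)$, and the statement of Theorem~\ref{thm:undir} forces $\tau_i\ge w(C)$ for every $i$ as well; as one of these quantities equals $w(C)$, in fact $w(C)=\min\bigl(w(\hat C),\,\tau_1,\ldots,\tau_k\bigr)$. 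It therefore suffices to compute every $\tau_i$ and return the overall minimum.

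For a fixed $G'_i$, let $A$ be its weighted adjacency matrix ($0$ on the diagonal, $+\infty$ off the edge set, edge weights otherwise); $A$ is a $\Theta(n)\times\Theta(n)$ matrix with finite entries in $\{1,\ldots,O(M)\}$. As recalled in the text preceding the corollary, a single distance product (i.e.\ a product over the $(\min,+)$ semiring) suffices to extract $\tau_i$: form $D=A\star A$, whose finite entries lie in $\{2,\ldots,O(M)\}$, and set $\tau_i=\min_{u\ne v}\bigl(A_{uv}+D_{vu}\bigr)$. Because the matrices have dimension $\Theta(n)$ and entries bounded by $O(M)$, the algorithm of~\cite{agm97,Yuval} computes this product in $\tilde{O}\bigl(M\cdot(\Theta(n))^\omega\bigr)=\tilde{O}(Mn^\omega)$ time via fast matrix multiplication (after scaling the $O(M)$-bounded entries into a single ring product), and the concluding $O(n^2)$ scan is negligible. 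Repeating over all $k=O(\log n)$ graphs costs $O(\log n)\cdot\tilde{O}(Mn^\omega)=\tilde{O}(Mn^\omega)$, which also dominates the $\tilde{O}(n^2)$ preprocessing from Theorem~\ref{thm:undir} since $\omega\ge 2$. Returning $\min(w(\hat C),\tau_1,\ldots,\tau_k)$ then outputs $w(C)$; to recover an actual minimum cycle rather than only its weight, one additionally runs the standard deterministic witness-finding procedure for distance products~\cite{zwickbridge} on the relevant $G'_i$ — again within $\tilde{O}(Mn^\omega)$ — to obtain the witnessing triangle and decode it into a cycle of $G$ using the correspondence supplied by the proof of Theorem~\ref{thm:undir}.

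Thus essentially all of the work — and the only place I expect a genuine obstacle — is in Theorem~\ref{thm:undir}: one must encode an arbitrarily long minimum cycle by just three edges while (a) staying on $\Theta(n)$ vertices, (b) keeping weights within a constant factor of $[1,M]$, and (c) spending only near-quadratic time, i.e.\ without first computing all-pairs shortest paths. A natural route is to guess the number of edges of the (unknown) minimum cycle up to a factor of two — this is what creates the $O(\log n)$ graphs — choose three vertices splitting the cycle into three arcs of roughly equal length, and show that the shortest $G$-paths between those three vertices reassemble into a cycle of weight exactly $w(C)$; the degenerate situations (very short cycles, or a triple whose shortest paths retrace instead of closing into a cycle) are precisely what the separately produced candidate $\hat C$ is meant to absorb. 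Arranging the $G'_i$ so that a single distance product recovers those arcs with no shortest-path precomputation is the delicate design point, and once Theorem~\ref{thm:undir} provides it, the corollary follows as above.
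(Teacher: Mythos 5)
Your proposal is correct and matches the paper's own (one-line) derivation: apply Theorem~\ref{thm:undir}, compute the minimum triangle in each of the $O(\log n)$ graphs via a single distance product using~\cite{agm97,Yuval} in $\tilde{O}(Mn^\omega)$ time each, and return the best of $w(\hat C)$ and the triangle weights, recovering the actual cycle from the witnessing triangle. The only minor caveat is that the inequality $\tau_i\ge w(C)$ is not literally forced by the theorem statement alone but by the construction behind it (no false triangles, Lemma~\ref{L-G'-has-no-false-triangle}), which is exactly how the paper intends the corollary to be read.
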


\noindent \emph{Directed graphs.} Our reduction for undirected graphs relies on the fact that distances are symmetric. It is unlikely that it is possible to modify the reduction so that it works for directed graphs as well. Hence, for directed graphs new ideas are required. The reduction to minimum triangle is not as efficient, however, the resulting algorithm for minimum cycle in directed graphs has
the same running time as the one for undirected graphs with nonnegative weights. Our approach for directed graphs can be combined with our approach for undirected graphs to yield an efficient algorithm also for {\em mixed} graphs,  that is, graphs which contain both directed and undirected edges. The approach works, provided the weights of the mixed graph are nonnegative.

When negative edge weights are allowed, a negative cycle may exist. Finding a minimum weight cycle when its weight is negative is an NP-hard problem,
as it solves Hamiltonian cycle. When negative weights are allowed, the minimum cycle problem in the absence of negative cycles is in P for both directed and undirected graphs, but is NP-hard for mixed graphs~\cite{papamix}.
Our techniques for directed graphs are strong enough to support negative edge weights within the same running time as when the weights are nonnegative. This is extremely interesting, as the typical way to reduce the general problem to the nonnegative weights problem involves computing node {\em potentials} (see e.g.~\cite{dirsssp}). These potentials however typically increase the magnitude of the weights to even $\Omega(Mn)$, which would be bad if our goal is to use algorithms that have exponential dependence on the bit representation of the weights, such as $\Ot(Mn^\omega)$. We circumvent the potential approach by focusing on the general problem directly.

We obtain:

\begin{theorem}
Let $G(V,E,w)$ be a directed graph on $n$ nodes, $w:E\rightarrow \{-M,\ldots, M\}$.
In $\tilde{O}(Mn^\omega)$
time one can
construct $O(\log n)$ graphs $G'_1,\ldots,G'_k$ on $\Theta(n)$ nodes and edge weights in $\{1,\ldots,O(M)\}$ so that
the minimum out of all weights of triangles in the graphs $G'_i$
  is exactly the weighted girth of $G$.
\label{thm:dir}
\end{theorem}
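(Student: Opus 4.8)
The plan is to follow the general shape of the undirected reduction — guess the length scale of a minimum cycle, place landmarks on it, and encode the three arcs between consecutive landmarks as the three edges of a triangle — but, since digraph distances are not symmetric, to replace the symmetric‑distance arguments by hitting‑set sampling together with bounded‑hop shortest paths, and to cope with negative weights by never computing node potentials. First I would split the computation of the weighted girth $g$ into $O(\log n)$ subproblems indexed by a guess $\ell\in\{c_0,2c_0,\ldots,n\}$ for the order of magnitude of the number of edges of a minimum weight cycle; cycles with at most $c_0$ edges (a small constant — in particular self‑loops and digons) are found directly by one distance product on $G$ after a uniform shift of all weights into $\{1,\ldots,2M+1\}$. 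For a fixed $\ell$ the goal is to build one undirected graph $G'_\ell$ whose minimum triangle weight is at least $g$ always and exactly $g$ when $\ell$ is the correct guess (i.e.\ when some minimum weight cycle has between $\ell$ and $2\ell$ edges). Taking the minimum of the triangle weights over all the $G'_\ell$ (together with the constant‑length cases) then yields $g$; if $G$ is acyclic no $G'_\ell$ will carry a finite‑weight triangle, matching $g=+\infty$.

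For the subproblem of scale $\ell$, suppose $C$ is a minimum weight cycle with $k\in[\ell,2\ell)$ edges. I would sample $S\subseteq V$ by keeping each vertex independently with probability $\Theta((\log n)/\ell)$; then w.h.p.\ $|S|=\Ot(n/\ell)$ and $S$ meets every subpath of $C$ with $\ell/3$ edges, so w.h.p.\ $C$ carries three distinct landmarks $a,b,c$ (in cyclic order) that cut $C$ into three arcs, each with $O(\ell)$ edges. I would then let $G'_\ell$ be the tripartite graph on three copies $S_1,S_2,S_3$ of $S$, with edge weights $d^{\le c\ell}_G(x\to y)$, $d^{\le c\ell}_G(y\to z)$, $d^{\le c\ell}_G(z\to x)$ for $x\in S_1,\,y\in S_2,\,z\in S_3$, where $d^{\le h}_G(\cdot\to\cdot)$ is the minimum weight of a walk using at most $h$ edges and $c$ is the constant from the arc bound (diagonal values $d^{\le h}_G(x\to x)$ are set to a quantity known to be $\ge g$, e.g.\ $nM$ or $+\infty$, to rule out degenerate triangles). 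For correctness: any triangle of $G'_\ell$ has weight equal to that of the corresponding closed walk $x\to y\to z\to x$ in $G$, and since $G$ has no negative cycle we have $g\ge 0$ and every nontrivial closed walk has weight $\ge g$ (a repeated vertex splits the walk into two nontrivial closed walks, so one concludes by induction); conversely, for the correct $\ell$ the triple $(a,b,c)$ is a triangle whose weight is at most $w(C)=g$, because each arc of $C$ is a walk with $O(\ell)$ edges, so $d^{\le c\ell}_G$ does not exceed it. Hence the minimum triangle weight of $G'_\ell$ is $\ge g$ always and $=g$ for the correct guess. One then pads $G'_\ell$ to $\Theta(n)$ vertices and shifts the edge weights into the required positive range.

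The only nontrivial work is producing, for every scale $\ell$, the $|S|\times|S|$ matrix of bounded‑hop distances $d^{\le c\ell}_G$ among the landmarks within a total of $\Ot(Mn^\omega)$, and I expect this to be the crux of the argument. A naive computation of $A^{\star 2},A^{\star 4},\ldots,A^{\star c\ell}$ by repeated distance‑product squaring uses $O(\log\ell)$ products of $n\times n$ matrices whose entries grow to $\Theta(\ell M)$, costing $\Ot(\ell M n^\omega)$ — too slow once $\ell$ is large. The resolution is to carry these distances only between hitting sets of the appropriate (shrinking) size, in a directed, cycle‑specific analogue of Zwick's recursive bridging‑set method: once one has bounded‑hop distances among a set of $\Ot(n/h)$ landmarks, a further doubling of the hop count is obtained by a distance product of dimension $\Ot(n/h)$ with entries $O(hM)$, at cost $\Ot\!\big(hM(n/h)^\omega\big)=\Ot(Mn^\omega/h^{\omega-1})\le\Ot(Mn^\omega)$, and summing over the $O(\log n)$ doublings and $O(\log n)$ scales keeps the total at $\Ot(Mn^\omega)$; the subsequent minimum‑triangle computations cost $\Ot\!\big(\ell M(n/\ell)^\omega\big)\le\Ot(Mn^\omega)$ by the same balancing. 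Negative weights never force node potentials: every distance product is performed after adding a uniform shift that makes all entries nonnegative and bounded by a constant times the current hop bound times $M$ — a shift changes each output entry by a known additive constant — so the weights stay $O(\text{hop bound}\cdot M)$ instead of blowing up to $\Omega(Mn)$ as potentials would, which is exactly what lets us keep the $M$‑dependence linear. Finally, the random hitting sets can be replaced by the standard deterministic constructions of sets meeting all sufficiently long paths, at no asymptotic cost, if a deterministic statement is desired.
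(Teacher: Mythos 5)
There is a genuine gap, and it sits exactly at the step you dismiss in one clause (``one then pads $G'_\ell$ \ldots and shifts the edge weights into the required positive range''). Your three landmarks $a,b,c$ are equispaced on $C$ in \emph{hop count}, not in \emph{weight}: the three arc weights sum to $w(C)$ but can be distributed arbitrarily, e.g.\ $(w(C),0,0)$ with $w(C)=\Theta(nM)$. Consequently the edges of $G'_\ell$ that must survive any pruning have weights spread over a range of size $\Theta(nM)$, and no uniform (or per-part) additive shift can compress them into $\{1,\ldots,O(M)\}$ while keeping the minimum triangle aligned with the girth; guessing the arc weights to additive precision $O(M)$ would need $\mathrm{poly}(n)$ instances, not $O(\log n)$. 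Since the whole content of Theorem~\ref{thm:dir} is that the weights of the triangle instance stay in an interval of size $O(M)$, this is the crux, and your decomposition cannot deliver it. The paper's missing ingredient is Lemma~\ref{lemma:middle} (the critical-edge lemma): for \emph{every} vertex $s$ of the minimum cycle there is a single original edge $(v_i,v_{i+1})$ such that both remaining pieces $d_C[s,v_i]$ and $d_C[v_{i+1},s]$ lie in $[\lceil w(C)/2\rceil-M,\lfloor w(C)/2\rfloor]$, i.e.\ the cycle is cut into one graph edge plus two pieces whose weights are pinned within $O(M)$ of $w(C)/2$. Because \emph{every} $s$ on $C$ has such an edge, a hitting-set argument only needs \emph{one} sampled vertex on $C$ (the paper gets the needed exact estimates $D[s,v_i]$, $D[v_{i+1},s]$ from the Yuster--Zwick $\tilde{O}(Mn^\omega)$ distance-oracle matrix), and a single binary-searched threshold $t\approx\lfloor w(C)/2\rfloor$ lets one delete all long edges outside $[t-M,t+M/2]$ and subtract $t$ from the two long-edge classes, bringing all weights into $[-M,M]$ (then shifted positive) while shifting every triangle weight by the same known constant.

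Two smaller points. First, your recursive hitting-set/doubling scheme for hop-bounded distances among shrinking landmark sets is essentially a re-derivation of the Yuster--Zwick construction the paper simply invokes; as written it is a sketch (the exactness guarantee only holds for genuinely shortest paths with the stated hop bound, which is fine here since arcs of a minimum cycle are shortest paths, but the recursion and its derandomization need to be spelled out). Second, your observation that weights $O(\ell M)$ on $\tilde{O}(n/\ell)$ landmark nodes still give triangle detection in $\tilde{O}(\ell M(n/\ell)^\omega)\le\tilde{O}(Mn^\omega)$ is a legitimate alternative balancing that could plausibly be pushed to an $\tilde{O}(Mn^\omega)$ \emph{algorithm} for the directed girth; but it does not prove the theorem as stated, whose point is precisely the $\Theta(n)$-node, weight-$O(M)$ reduction used later for the equivalences.
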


%
%
\paragraph{Our results: equivalences.}
Vassilevska W. and Williams~\cite{focsy} showed that the minimum triangle and minimum cycle problems are equivalent, under subcubic reductions.
Their reduction from minimum triangle to minimum cycle only increased the number of nodes and the size of the edge weights by a constant factor.
 However, their reduction from minimum cycle to minimum triangle was not tight; it only proved that an $O(n^{3-\eps})$ algorithm for minimum triangle would imply an $O(n^{3-\eps/3})$ algorithm for minimum cycle. Our reductions, on the other hand, imply a much stronger equivalence between the two problems. This equivalence is especially strong for undirected graphs with integral weights from the range $[1,M]$.

\begin{corollary}
If there is a $T(n,M)$ time algorithm for the minimum cycle problem in undirected graphs with integral edge weights in $[1,M]$, then there is a $T(O(n),O(M))+O(n^2)$ time algorithm for the minimum triangle problem in such graphs.
Conversely, if there is a $T(n,M)$ time algorithm for the minimum triangle problem in undirected graphs with integral edge weights in $[1,M]$, then there is an $O(T(O(n),O(M))\log n+ n^2\log n\log nM)$ time algorithm for the minimum cycle problem in such graphs.
\label{cor:equiv}
\end{corollary}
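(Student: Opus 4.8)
The plan is to prove the two implications of Corollary~\ref{cor:equiv} separately, starting with the reduction from minimum triangle to minimum cycle (the first implication). The idea is the standard tripartite gadget together with a uniform additive shift of the edge weights to kill short spurious cycles. Given a minimum triangle instance $G(V,E,w)$ with $w:E\to\{1,\ldots,M\}$, I would build $H$ on vertex set $V\times\{1,2,3\}$, inserting for every edge $\{u,v\}\in E$ the six edges $\{(u,1),(v,2)\}$, $\{(v,1),(u,2)\}$, $\{(u,2),(v,3)\}$, $\{(v,2),(u,3)\}$, $\{(u,3),(v,1)\}$, $\{(v,3),(u,1)\}$, each of weight $w(u,v)+3M$; equivalently, $(a,i)$ and $(b,j)$ are adjacent in $H$ iff $i\ne j$ and $\{a,b\}\in E$, so $H$ has $3n$ nodes, $O(n^2)$ edges and weights in $\{3M+1,\ldots,4M\}$. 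A triangle $\{a,b,c\}$ of $G$ becomes the $3$-cycle $(a,1)-(b,2)-(c,3)-(a,1)$ of $H$ of weight $w(a,b)+w(b,c)+w(c,a)+9M$, and conversely a $3$-cycle of $H$ is a set of three pairwise-adjacent nodes, which must lie in three distinct layers and whose projection to $V$ is a triangle of $G$ of exactly that weight minus $9M$; thus $3$-cycles of $H$ correspond to triangles of $G$ with a fixed offset of $9M$. Since every edge of $H$ has weight at least $3M+1$, every cycle of $H$ of length $\ell\ge 4$ has weight at least $4(3M+1)=12M+4$, which strictly exceeds $12M$, the largest possible weight of a $3$-cycle of $H$. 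Hence the algorithm is: build $H$ in $O(n^2)$ time, call the assumed minimum cycle algorithm on $H$ at cost $T(3n,4M)$, which is $T(O(n),O(M))$, to get a value $\gamma$; if $\gamma\le 12M$ output $\gamma-9M$ (the minimum triangle weight of $G$), and otherwise report that $G$ is triangle-free. The total time is $T(O(n),O(M))+O(n^2)$.

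For the second implication --- a $T(n,M)$-time minimum triangle algorithm yielding a fast minimum cycle algorithm --- I would use Theorem~\ref{thm:undir} as a black box. Given a minimum cycle instance $G$ with weights in $\{1,\ldots,M\}$, run the algorithm of Theorem~\ref{thm:undir} to produce, in $O(n^2(\log nM)\log n)$ time, a cycle $\hat C$ of $G$ and $k=O(\log n)$ graphs $G'_1,\ldots,G'_k$ on $\Theta(n)$ nodes with weights in $\{1,\ldots,O(M)\}$; then call the assumed minimum triangle algorithm on each $G'_i$ to obtain its minimum triangle weight $t_i$, and output $\min\{w(\hat C),t_1,\ldots,t_k\}$. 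Theorem~\ref{thm:undir} guarantees that either $w(\hat C)$ or $\min_i t_i$ equals $w(C)$; and since $\hat C$ is a cycle of $G$ and, by the construction underlying the theorem, every triangle of every $G'_i$ is the image of a cycle of $G$ of the same weight, none of $w(\hat C),t_1,\ldots,t_k$ is below $w(C)$, so their minimum is exactly $w(C)$. The running time is $O(n^2\log n\log nM)$ for the reduction plus $O(\log n)$ calls of cost $T(O(n),O(M))$ each, i.e.\ $O(T(O(n),O(M))\log n+n^2\log n\log nM)$, matching the statement.

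The one genuinely delicate point is the weight shift in the first implication: without it, $H$ contains $4$-cycles (and longer even cycles confined to two layers, e.g.\ $(a,1)-(b,2)-(c,1)-(d,2)-(a,1)$ whenever $G$ has a $4$-cycle $a\,b\,c\,d$) whose total weight can be smaller than that of a genuine triangle once $M$ is large, so the correspondence with triangles of $G$ would break. Adding $3M$ to every edge of $H$ is exactly what forces every cycle of length at least $4$ to outweigh every $3$-cycle, while still keeping all weights within a constant factor of $M$; once that is in place, the rest of both implications is routine bookkeeping.
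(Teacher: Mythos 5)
Your proof is correct and follows essentially the route the paper intends: the cycle-to-triangle direction is exactly the black-box use of Theorem~\ref{thm:undir} (with the ``no false triangles'' property guaranteeing $t_i\geq w(C)$), and the triangle-to-cycle direction is the same tripartite blow-up with a $\Theta(M)$ additive shift that the paper imports from the constant-blow-up reduction of Vassilevska W.\ and Williams~\cite{focsy} rather than spelling out. The only nitpick is that triangles of $G'_i$ map to cycles of $G$ of \emph{no larger} (not necessarily equal) weight, which is all your argument actually needs.
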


A natural question is whether the triangle problem is special. Do similar reductions exist between minimum cycle and minimum $k$-cycle for constant $k>3$? We answer this in the affirmative.

\begin{theorem}
Let $k$ be any fixed constant.
Let $G(V,E,w)$ be a graph on $n$ nodes, $w:E\rightarrow \{1,\ldots, M\}$.
One can
construct $O(\log n)$ undirected graphs $G'_1,\ldots,G'_\ell$ on $\Theta(kn)$ nodes and edge weights in $\{1,\ldots,O(M)\}$ so that
the minimum out of all weights of $k$-cycles in the graphs $G'_i$
  is exactly the weighted girth of $G$. Moreover, given the minimum weight $k$-cycle of the graphs $G'_i$, one can find a minimum weight cycle of $G$ in $O(n)$ additional time.
If $G$ is directed, the reduction runs in $\tilde{O}(Mn^\omega)$
time, and if $G$ is undirected, it runs in $\tilde{O}(n^2\log nM)$ time.
\label{thm:equiv2}
\end{theorem}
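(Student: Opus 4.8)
The plan is to mimic the reductions behind Theorems~\ref{thm:dir} and~\ref{thm:undir} --- which cut a minimum cycle into three arcs, each encoded as one edge of a triangle via a shortest‑path distance in a carefully chosen subgraph --- but to cut the minimum cycle into $k$ arcs, each encoded as one edge of a $k$‑gon. Concretely, for each of the $O(\log n)$ ``length‑scale guesses'' used in those proofs, build a cyclically layered graph on $k$ layers $L_0,\dots,L_{k-1}$, each a copy of $V$, where the edge between $a\in L_i$ and $b\in L_{i+1\bmod k}$ has weight equal to the shortest‑path distance from $a$ to $b$ in the subgraph $H_i$ prescribed by the triangle reduction (for $k=3$ this is that reduction verbatim). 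We may assume $k\ge 3$.

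For completeness, a minimum cycle $C$ is (by a standard argument) a concatenation of shortest paths between any chosen points on it, so cutting $C$ at the right split points $v_0,\dots,v_{k-1}$ gives a $k$‑cycle $v_0v_1\cdots v_{k-1}$ of weight $w(C)$ in one of the layered graphs. For soundness, any cycle of a layered graph --- winding once around the layers or ``bouncing'' back and forth --- has weight equal to that of a closed walk $W$ of $G$ obtained by concatenating the corresponding $H_i$‑shortest‑paths; for directed $G$, $W$ decomposes into simple directed cycles of positive weight, so $w(W)\ge$ girth, while for undirected $G$ one must additionally guarantee that $W$ cannot degenerate into pure backtracks (this is where the choice of the $H_i$, inherited from Theorem~\ref{thm:undir}, is used). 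Granting this, the minimum over all $k$‑cycles of all layered graphs equals the weighted girth, and reading off and concatenating the $k$ arcs of an optimal $k$‑cycle recovers a minimum cycle of $G$ in $O(n)$ time; since $k$ is constant, the $\Theta(kn)$ node count, the $\{1,\dots,O(M)\}$ weight range, and the running time are inherited from Theorems~\ref{thm:dir}/\ref{thm:undir}, the layered graphs being built in time linear in their size on top of them. When $k$ is odd there is a clean shortcut that bypasses the subgraph choices: scale each triangle instance by $\Theta(k)$ and stretch its triangle into a $k$‑gon by inserting $k-3$ extra layers joined by identity matchings of weight~$1$ (absorbing the additive $k-3$ via the scaling); since $k$ is odd, every closed walk of length $k$ on the layer‑cycle $C_k$ must wind around exactly once, so the only $k$‑cycles are the stretched triangles.

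The step I expect to be the real obstacle is the undirected case. The odd‑$k$ shortcut fails for even $k$ --- a length‑$k$ closed walk on $C_k$ can go out $k/2$ layers and come back, and any dense weight‑carrying bipartite part of the layered graph already contains a cheap $4$‑cycle --- so for even $k$ one is forced into the direct construction, where the difficulty is precisely that a shortest $a$--$b$ path can be the reverse of a shortest $b$--$a$ path. Consequently a ``bouncy'' $k$‑cycle of a layered graph may correspond to a closed walk of $G$ that is a pure out‑and‑back traversal containing no genuine cycle and can be strictly lighter than the girth --- exactly the hurdle Theorem~\ref{thm:undir} faces at $k=3$ and overcomes using the symmetry of distances. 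The work is to carry that fix over to $k$‑gons, i.e.\ to pick the subgraphs $H_i$ so that consecutive super‑edges along any cycle of the layered graph cannot cancel, and also to account for the auxiliary cycle $\hat C$ that Theorem~\ref{thm:undir} emits. The remaining checks --- the $O(1)$ small‑$k$ cases, the constant‑factor weight and node bookkeeping, and correctness of the $O(n)$‑time traceback --- are routine.
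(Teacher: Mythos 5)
Your odd-$k$ ``shortcut'' is essentially the paper's entire proof, and the genuine gap in your proposal is that you abandon that shortcut for even $k$ and replace it with an argument you explicitly do not complete. The paper never builds the cyclic $k$-layer graph at all: it composes the cycle-to-triangle reductions (Theorems~\ref{thm:undir} and~\ref{thm:dir}) with a gadget turning a tripartite minimum-triangle instance with weights in $[-M,M]$ into a minimum-$k$-cycle instance, by replacing each node $v$ of one part $V^1$ with a path $P_v$ on $k-2$ nodes whose internal edges have weight $0$ (your ``identity matchings''), reattaching $v$'s edges to the two endpoints of $P_v$, and adding $5M$ to every edge not lying on a path $P_v$. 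Soundness for \emph{all} $k\ge 4$ then needs no parity argument: (i) a simple cycle cannot reuse an edge, and every internal node of a path $P_v$ has degree exactly $2$, so any $k$-cycle touching a path edge must swallow the whole path $P_v$ and therefore consists of $P_v$ plus one node of $V^2$ and one of $V^3$, i.e.\ a stretched triangle through $v$; (ii) any $k$-cycle avoiding all path edges uses $k$ edges of weight at least $4M$ each after the shift, hence has weight at least $4kM$, which for $k\ge 4$ exceeds the weight $W+15M$ of any stretched triangle. The ``cheap $4$-cycle in a dense weight-carrying bipartite part'' that you fear for even $k$ is exactly what the additive shift kills --- such a cycle pays the $\Theta(M)$ surcharge on all $k$ of its edges instead of on only three --- and you already had the right instinct with your ``scale by $\Theta(k)$'' step, but you deployed the scaling only to absorb the matching weights rather than to separate winding from non-winding cycles.

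As written, then, your proposal does not prove the theorem for even $k$: the fallback construction (shortest-path super-edges between every pair of consecutive layers of a cyclic $k$-layer graph) reintroduces, $k$ times over, the false-triangle problem that Section~\ref{s-undirected} needed color-coding and Lemma~\ref{lemma:twopaths} to solve for a single layer pair, and you leave ``carrying that fix over to $k$-gons'' as an acknowledged to-do. That step is not routine --- it is the whole difficulty of the direct approach --- and it is unnecessary, since the path-subdivision gadget above, applied once to the output of the triangle reduction, closes the even case (and the odd case) in $O(n^2)$ extra time while keeping $\Theta(kn)$ nodes and weights in $\{1,\ldots,O(M)\}$.
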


\paragraph{Our results: approximation.}
Another approach to gain efficiency for problems with seemingly no subcubic time exact algorithms
 has been to develop fast approximation algorithms (see~\cite{zwickbridge,aingworth,almostshortest} in the context of shortest paths).
 Lundell and Lingas~\cite{ll09} gave two approximation algorithms for the girth problem: an $\tilde{O}(n^{1.5})$ time $8/3$-approximation
 for undirected unweighted graphs, and an $O(n^2 (\log n)\log nM)$ time $2$-approximation
for undirected graphs with integer weights in the range $\{1,\ldots, M\}$. Very recently, Roditty and Tov~\cite{liamsoda} improved the approximation factor to $4/3$-approximation for the weighted case while keeping the running time unchanged.
Due to Zwick's~\cite{zwickbridge} $\tilde{O}(n^\omega/\eps \log (M/\eps))$ time $(1+\eps)$-approximation for APSP and the simple reduction from minimum weight cycle in directed graphs to APSP, the girth of a directed graph admits an $(1+\eps)$-approximation in $\tilde{O}(n^\omega/\eps \log (M/\eps))$ time.
Our reduction from Theorem~\ref{thm:undir} implies the same result for undirected graphs with nonnegative weights as well, following up on the work of Roditty and Tov from SODA'11~\cite{liamsoda}.

\begin{theorem}
There is an $\tilde{O}(n^\omega/\eps \log (M/\eps))$ time $(1+\eps)$-approximation algorithm for the minimum cycle problem in undirected graphs with integral weights in $[1, M]$.
\end{theorem}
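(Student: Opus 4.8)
The plan is to run the reduction of Theorem~\ref{thm:undir} and then, instead of solving the resulting minimum weight triangle instances exactly, to approximate them by a Zwick-style scaling argument. Concretely, I would first invoke Theorem~\ref{thm:undir} in $O(n^2(\log nM)\log n)$ time to obtain a cycle $\hat C$ of $G$ and $k=O(\log n)$ undirected graphs $G'_1,\dots,G'_k$, each on $\Theta(n)$ nodes with integer weights in $\{1,\dots,O(M)\}$, such that either $w(\hat C)=w(C)$ or the lightest triangle among all the $G'_i$ has weight exactly $w(C)$. Since in the reduction every triangle of a $G'_i$ encodes a cycle of $G$, no triangle of any $G'_i$ can be lighter than $w(C)$; therefore it suffices to compute, for each $i$, a triangle $T_i$ of $G'_i$ of weight at most $(1+\eps)$ times the minimum triangle weight of $G'_i$, and to output the lighter of $w(\hat C)$ and $\min_i w(T_i)$ (together with the corresponding cycle of $G$ recovered through the reduction). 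Indeed, if $w(\hat C)\neq w(C)$ then the family $\{T_i\}$ contains a triangle of weight in $[w(C),(1+\eps)w(C)]$, and if $w(\hat C)=w(C)$ then every $T_i$ has weight $\ge w(C)$, so in both cases the reported value lies in $[w(C),(1+\eps)w(C)]$.

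\textbf{Approximate minimum triangle.} It then remains to $(1+\eps)$-approximate the minimum weight triangle of an $N$-node graph ($N=\Theta(n)$) with integer weights in $\{1,\dots,O(M)\}$ within $\tilde O(N^\omega/\eps\cdot\log(M/\eps))$ time, which I would do by scaling over weight classes exactly as in Zwick's approximate APSP algorithm~\cite{zwickbridge}. Every triangle weight lies in $[3,O(M)]$, so for each guess $j\in\{1,\dots,\lceil\log_2 O(M)\rceil\}$ that the optimal triangle weight lies in $[2^j,2^{j+1})$, I would delete all edges of weight exceeding $2^{j+1}$, set $\lambda=\lceil 6/\eps\rceil$, and replace each surviving weight $w_e$ by the integer $w'_e=\lceil w_e\lambda/2^j\rceil\le 2\lambda$. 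As the rescaled weights are $O(1/\eps)$-bounded positive integers, a minimum weight triangle under $w'$ can be computed from a single $(\min,+)$-product of $N\times N$ matrices with $O(1/\eps)$-bounded entries, in $\tilde O(N^\omega/\eps)$ time~\cite{agm97,Yuval}; call the returned triangle $T_j$ and record its true weight $w(T_j)=\sum_{e\in T_j}w_e$. Summing over the $O(\log M)$ guesses and the $k=O(\log n)$ instances, the total running time is $\tilde O(N^\omega\log M/\eps)=\tilde O(n^\omega\log(M/\eps)/\eps)$.

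\textbf{Correctness of the scaling.} Let $\Delta$ be the true minimum triangle weight of a given $G'_i$ and let $j^*$ be the guess with $\Delta\in[2^{j^*},2^{j^*+1})$. On guess $j^*$ the optimal triangle $T^*$ survives edge deletion (each of its three edges has weight $<2^{j^*+1}$), so $w'(T^*)\le\Delta\lambda/2^{j^*}+3\le(1+\eps)\Delta\lambda/2^{j^*}$, using $\Delta\ge 2^{j^*}$ and $\lambda\ge 6/\eps$; hence $w'(T_{j^*})\le w'(T^*)\le(1+\eps)\Delta\lambda/2^{j^*}$. Because every surviving edge satisfies $w_e\le w'_e\cdot 2^{j^*}/\lambda$ (this is exactly where deleting heavy edges, rather than truncating their weights, is needed), I get $w(T_{j^*})\le w'(T_{j^*})\cdot 2^{j^*}/\lambda\le(1+\eps)\Delta$, while $w(T_{j^*})\ge\Delta$ by optimality. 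For every other guess $j$, $T_j$ is a genuine triangle and so $w(T_j)\ge\Delta$. Thus $\min_j w(T_j)\in[\Delta,(1+\eps)\Delta]$, and feeding this into the overview step yields the claimed approximation and running time.

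\textbf{Main obstacle.} The single delicate point is the scaling step itself: a naive truncation of over-heavy edge weights would let a triangle that is light under $w'$ masquerade as an approximately-minimum one while having a large true weight, so the argument hinges on deleting those edges at each scale and then controlling the rounding in both directions. Everything else — the reduction (already supplied by Theorem~\ref{thm:undir}), the bounded-weight distance-product subroutine, and the recovery of an actual cycle of $G$ — is either cited or routine. I would also note, for context, that in contrast to the directed case, where a $(1+\eps)$-approximate girth follows immediately from Zwick's approximate APSP, the undirected case genuinely requires Theorem~\ref{thm:undir}, since the direct APSP-based reduction fails when an edge $(u,v)$ is itself the shortest path between $u$ and $v$.
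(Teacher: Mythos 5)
Your proposal is correct and follows essentially the same route the paper intends: run the reduction of Theorem~\ref{thm:undir} and then $(1+\eps)$-approximate the minimum triangle in the $O(\log n)$ bounded-weight instances via Zwick-style scaling of the distance product, taking the better of $\hat C$ and the recovered cycle. The paper only sketches this in a sentence (citing Zwick's approximate APSP technique), and your spelled-out details — deleting, not truncating, edges above each weight scale, rounding with $\lambda=\Theta(1/\eps)$, and the one-sided guarantee that no triangle is lighter than the girth — are the right way to fill it in.
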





%
%
%
%
%

\section{Preliminaries}
Let $G(V,E,w)$ be a weighted graph, where $V$ is its set of {\em vertices} or {\em nodes} (we use these terms interchangeably), $E\subseteq V\times V$ is its set of edges, and $w:E\rightarrow \{1,\ldots,M\}$ is a weight function. The function $w(\cdot,\cdot)$ can be extended to the entire $V\times V$ by setting $w(u,v)=\infty$ for every $(u,v)\notin E$. Unless otherwise noted, $n$ refers to the number of nodes in the graph.

An edge can be directed or undirected. An \emph{undirected} graph is a graph with undirected edges only and a directed graph is a graph with directed edges only.
A \emph{mixed } graph is a graph that may have both directed and undirected edges.
All graphs considered in this paper are {\em simple}. A graph is simple if it does not contain self loops or multiple copies of the same edge.
In a simple mixed graph, a node pair $x,y$ cannot be connected by both a directed and an undirected edge. In both directed and mixed simple graphs,
two directed edges $(x,y)$ and $(y,x)$ in opposite directions are allowed since they are considered distinct.

We define a cycle $C$ in a graph $G(V,E,w)$ to be an ordered set of vertices $\{v_1,v_2,\ldots, v_\ell\}$, such that
$(v_i,v_{i+1})\in E$ for every $i < \ell$ and $(v_\ell,v_1)\in E$. Let $w(C)$ be the sum of the weights of the edges of $C$ and let $w_{\max}(C)$ be the weight of the heaviest edge. We denote with $d_C[v_i,v_j]$ the weight of the path that traverses the cycle from $v_i$ to $v_j$ by passing from $v_i$ to $v_{i+1}$ and so on. In the case that $j<i$ we traverse from $v_\ell$ to $v_1$ and continue until we reach $v_i$. Let $n(C)$ denote the number of vertices/edges in $C$.
A cycle $C$ is {\em simple} if no node or edge appears twice in $C$. With this definition, an undirected graph cannot have a simple cycle on $2$ nodes, where as directed
and mixed
graphs can, provided the two cycle edges are in opposite directions and hence distinct.

\section{Our approach}\label{s-approach}


Our reductions are based on a combinatorial property of cycles in weighted directed, undirected and mixed graphs that might be of independent interest. This property is extremely useful as it shows that crucial portions of the minimum weight cycle are shortest paths.
We present this property in the following lemma.



\begin{lemma}[Critical edge] Let $G(V,E,w)$ be a weighted graph, where $w: E \rightarrow\mathbb{R}$, and assume that $G$ does not contain a negative cycle. Let $C=\{v_1,v_2,\ldots, v_\ell\}$ be a cycle in $G$ of weight $w(C)\geq 0$ and let $s\in C$. There exists an edge $(v_i,v_{i+1})$ on $C$ such that $\lceil w(C)/2\rceil-w(v_i,v_{i+1}) \leq d_C[s,v_i]\leq \lfloor w(C)/2\rfloor$
and $\lceil w(C)/2\rceil-w(v_i,v_{i+1}) \leq d_C[v_{i+1},s]\leq \lfloor w(C)/2\rfloor$. Furthermore, if $C$ is a minimum weight cycle in $G$ then $d[s,v_i]=d_C[s,v_i]$ and $d[v_{i+1},s]=d_C[v_{i+1},s]$.
\label{lemma:middle}
\end{lemma}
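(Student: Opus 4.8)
The plan is to walk around the cycle $C$ starting from $s$ and track the prefix distance $d_C[s,v_i]$ as $i$ increases. Concretely, let $s=v_1$ (re-index so that the start vertex is the first). Define $f(i)=d_C[s,v_i]$ for $i=1,\dots,\ell$, so $f(1)=0$ and $f(i+1)=f(i)+w(v_i,v_{i+1})$; also $f(\ell)+w(v_\ell,v_1)=w(C)$. As $i$ runs from $1$ to $\ell+1$ (identifying $v_{\ell+1}$ with $v_1$ and $f(\ell+1)=w(C)$), the value $f(i)$ starts at $0$ and ends at $w(C)$, increasing by $w(v_i,v_{i+1})$ at step $i$. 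I would pick the last index $i$ for which $f(i)\le \lfloor w(C)/2\rfloor$. Then $f(i)\le\lfloor w(C)/2\rfloor$ but $f(i+1)=f(i)+w(v_i,v_{i+1})>\lfloor w(C)/2\rfloor$, i.e. $f(i+1)\ge\lfloor w(C)/2\rfloor+1\ge\lceil w(C)/2\rceil$ when $w(C)$ is even (and $f(i+1)\ge\lceil w(C)/2\rceil$ directly when $w(C)$ is odd, since then $\lfloor w(C)/2\rfloor+1=\lceil w(C)/2\rceil$). Hence $f(i)=f(i+1)-w(v_i,v_{i+1})\ge\lceil w(C)/2\rceil-w(v_i,v_{i+1})$, which together with $f(i)\le\lfloor w(C)/2\rfloor$ gives the first pair of inequalities: $\lceil w(C)/2\rceil-w(v_i,v_{i+1})\le d_C[s,v_i]\le\lfloor w(C)/2\rfloor$. (One must check this last index $i$ is well-defined and $\le\ell$: $f(1)=0\le\lfloor w(C)/2\rfloor$ so the set of valid indices is nonempty, and since $w(C)\ge 0$ we need a small argument — if $w(C)=0$ every $f(i)=0$ works; if $w(C)>0$ then $f(\ell+1)=w(C)>\lfloor w(C)/2\rfloor$, so the last valid index is at most $\ell$.)

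For the second pair of inequalities I would use the identity $d_C[v_{i+1},s]=w(C)-d_C[s,v_{i+1}]=w(C)-f(i+1)$ (going the rest of the way around the cycle). From $f(i+1)\ge\lceil w(C)/2\rceil$ we get $d_C[v_{i+1},s]\le w(C)-\lceil w(C)/2\rceil=\lfloor w(C)/2\rfloor$. From $f(i+1)=f(i)+w(v_i,v_{i+1})\le\lfloor w(C)/2\rfloor+w(v_i,v_{i+1})$ we get $d_C[v_{i+1},s]=w(C)-f(i+1)\ge w(C)-\lfloor w(C)/2\rfloor-w(v_i,v_{i+1})=\lceil w(C)/2\rceil-w(v_i,v_{i+1})$. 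This gives exactly $\lceil w(C)/2\rceil-w(v_i,v_{i+1})\le d_C[v_{i+1},s]\le\lfloor w(C)/2\rfloor$, the desired second bound.

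For the "furthermore" part, suppose $C$ is a minimum weight cycle and, for contradiction, that $d[s,v_i]<d_C[s,v_i]$ — i.e. there is a strictly shorter $s$-to-$v_i$ path $P$ in $G$. Splice $P$ together with the portion of $C$ from $v_i$ back to $s$ (the arc of length $d_C[v_i,s]=w(C)-f(i)$). This closed walk has total weight $d[s,v_i]+(w(C)-f(i))<f(i)+(w(C)-f(i))=w(C)$. I would argue this closed walk contains a cycle of weight $<w(C)$: since $G$ has no negative cycle, any closed walk can be decomposed into simple cycles whose total weight is the weight of the walk, so at least one of these simple cycles has weight strictly less than $w(C)$ — wait, that is not quite immediate since a closed walk of weight $<w(C)$ could split into several cycles not all of small weight only if some have negative weight, but there are no negative cycles, so indeed every simple cycle in the decomposition has weight $\le$ the walk weight $<w(C)$, contradicting minimality of $C$. (Here is where the no-negative-cycle hypothesis is essential.) The symmetric argument with the path from $v_{i+1}$ to $s$ handles $d[v_{i+1},s]=d_C[v_{i+1},s]$: a shorter $v_{i+1}$-to-$s$ path spliced with the $C$-arc from $s$ to $v_{i+1}$ would again yield a closed walk of weight $<w(C)$, hence a cycle beating $C$.

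The main obstacle is not any single step but getting the boundary/parity bookkeeping exactly right: choosing the correct "last index" so that the two-sided bounds hold simultaneously with the floors and ceilings as stated, and being careful in the degenerate cases $w(C)=0$ and $\ell$ small (e.g. a $2$-cycle in a directed graph). The cycle-decomposition argument in the "furthermore" part also needs the standard but slightly fiddly fact that a closed walk in a graph with no negative cycle can be written as an edge-disjoint (or at least weight-additive) union of simple cycles; I would state this as a small sublemma rather than prove it from scratch.
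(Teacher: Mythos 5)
Your first part is correct and is essentially the paper's own argument: you scan the prefix sums $f(i)=d_C[s,v_i]$ around the cycle and take the edge on which they cross from $\le\lfloor w(C)/2\rfloor$ to $\ge\lceil w(C)/2\rceil$, then obtain the second pair of bounds from $d_C[v_{i+1},s]=w(C)-f(i+1)$. Two small remarks: like the paper, your step ``$f(i+1)>\lfloor w(C)/2\rfloor$ hence $f(i+1)\ge\lceil w(C)/2\rceil$'' silently uses integrality of the weights (the statement says real weights, but the applications are integral, and the paper's own proof has the same reliance); and your boundary discussion for $w(C)=0$ is off when negative edges are present (the prefix sums need not all be $0$), but the fix is immediate: take the last $i\in\{1,\dots,\ell\}$ with $f(i)\le\lfloor w(C)/2\rfloor$, and if $i=\ell$ note $f(\ell+1)=w(C)\ge\lceil w(C)/2\rceil$ automatically since $w(C)\ge 0$.

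The genuine gap is in the ``furthermore'' part, in the undirected case. Your sublemma --- every closed walk decomposes weight-additively into simple cycles --- is true for directed walks (where $2$-node cycles are legitimate, and your argument is then exactly what the paper dismisses as straightforward), but it is false for undirected graphs: a closed undirected walk may traverse an edge out and back, and such doubled edges are not simple cycles under the paper's conventions (no $2$-node undirected cycles). A priori the spliced walk (shortcut path $P$ plus the complementary arc $C_2$ of $C$) could consist largely of such retracings and contain no simple cycle at all, and when negative undirected edges are allowed a doubled negative edge makes the walk cheap without violating the no-negative-cycle hypothesis, so ``walk weight $<w(C)$, hence some cycle in the decomposition has weight $<w(C)$'' breaks down. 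This is exactly the point where the paper treats the undirected case separately: it uses $d[v_{i+1},s]<d_C[v_{i+1},s]\le\lfloor w(C)/2\rfloor\le w(C_2)$ to conclude that $P$ and $C_2$ differ in at least one edge, and then extracts from the union of two distinct simple paths with common endpoints a simple cycle of weight at most $w(P)+w(C_2)<w(C)$ (using positivity of the relevant weights). Your proposal needs this additional step (or an equivalent symmetric-difference argument) to cover undirected graphs; as written, the decomposition fact you plan to cite does not hold in that setting.
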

\begin{proof}
We can assume, wlog, that $s=v_1$. We start to traverse along $C$ from $v_1$ until we reach the first edge $(v_i,v_{i+1})$ that satisfies  $d_C[v_1,v_i]\leq \lfloor w(C)/2\rfloor$ and $d_C[v_{1},v_{i}]+w(v_i,v_{i+1})\geq \lceil w(C)/2\rceil$. Since $d_C[v_1,v_1]=0\leq \lfloor w(C)/2\rfloor$ either we find an edge  $(v_i,v_{i+1})$ that satisfies the requirement, where $i< \ell$ or we reach $v_\ell$ without finding such an edge. In the latter case $d_C[v_1,v_\ell] \leq \lfloor w(C)/2\rfloor$ and since $d_C[v_{1},v_{\ell}]+w(v_\ell,v_{1})=w(C)\geq \lceil w(C)/2\rceil$ the edge $(v_\ell,v_1)$ satisfies the requirement.

It follows immediately from the properties of the edge $(v_i,v_{i+1})$ that $d_C[v_{1},v_{i}]\geq \lceil w(C)/2\rceil - w(v_i,v_{i+1})$ and hence we get that $\lceil w(C)/2\rceil-w(v_i,v_{i+1}) \leq d_C[v_1,v_i]\leq \lfloor w(C)/2\rfloor$ as required.

We now bound $d_C[v_{i+1},v_1]$. We know that $d_C[v_{i+1},v_1] = w(C) - (d_C[v_1,v_i]+w(v_i,v_{i+1}))$. Since  $d_C[v_{1},v_{i}]+w(v_i,v_{i+1})\geq \lceil w(C)/2\rceil$ we get that $d_C[v_{i+1},v_1] \leq \lfloor w(C)/2\rfloor$. Also, since $d_C[v_1,v_i]\leq \lfloor w(C)/2\rfloor$ we get that $d_C[v_{i+1},v_1] \geq \lceil w(C)/2\rceil-w(v_i,v_{i+1})$.

It remains to show that if $C$ is a minimum weight cycle, then $d[v_1,v_i]=d_C[v_1,v_i]$ and $d[v_{i+1},v_1]=d_C[v_{i+1},v_1]$. If $G$ is a directed graph, then it is straightforward to see that the minimality of $C$ implies that $d_C[u,v]=d[u,v]$ for every $u,v\in C$ and in particular $d[v_1,v_i]=d_C[v_1,v_i]$ and $d[v_{i+1},v_1]=d_C[v_{i+1},v_1]$ as required. Thus, we only need to consider the case that $G$ is an undirected
graph. From the first part of the proof we know that
$d_C[v_{i+1},v_{1}]\leq \lfloor w(C)/2\rfloor$. If $d[v_{i+1},v_{1}]<d_C[v_{i+1},v_{1}]$, then let $P$ be the path from $v_{i+1}$ to $v_1$ of weight $d[v_{i+1},v_{1}]$ and let $C_2$ be the portion of $C$ from $v_1$ to $v_{i+1}$. The union of $P$ and $C_2$ is a walk in $G$ whose weight is strictly less than $w(C)$. Furthermore, since $d[v_{i+1},v_{1}]<d_C[v_{i+1},v_{1}]\leq \lfloor w(C)/2\rfloor\leq w(C_2)$, $P$ and $C_2$ must differ by at least one edge and hence $P\cup C_2$ contains some simple cycle of weight less than $w(C)$, a contradiction to the minimality of $C$.
The argument for showing that $d[v_1,v_i]=d_C[v_1,v_i]$ is symmetric.
\end{proof}

Lemma~\ref{lemma:middle} shows that it is possible to decompose every cycle into three portions: a single edge of weight at most $O(M)$ and two pieces whose weight differs by at most $O(M)$, and which are shortest paths if the cycle is of minimum weight. This observation is crucial for our efficient reductions.
Another important piece of Lemma~\ref{lemma:middle} is that \emph{every} vertex on the cycle has a critical edge. This is especially important in the directed graph case.

Armed with Lemma~\ref{lemma:middle} we can describe the general framework of our approach.
Suppose that we have some way to compute a function $D:V \times V \rightarrow \mathbb{R}$ that satisfies:

\begin{itemize}
\item For every $u,v\in V$, $d[u,v] \leq D[u,v]$
\item There exists a vertex $v$ on the minimum cycle $C$ whose critical edge $(x,y)$ endpoints satisfy $D[v,x]=d[v,x]$ and $D[y,v]=d[y,v]$.
\end{itemize}

In Section~\ref{s-undirected} we show how to compute a function $D$ in $O(n^2\log n\log Mn)$ time for undirected graphs with integral weights from $[1,M]$, and in Section~\ref{s-directed} we show how to compute a function $D$ in $\tilde{O}(Mn^\omega)$ time for directed
graphs with integral weights from $[-M,M]$ and no negative cycles. 



Now consider the following (multi-)graph $G'(V',E',w')$ where $V'=V^1\cup V^2$ and $V^1,V^2$ are disjoint copies of $V$. For every $D[a,b]$ which was computed we place an edge between $a^1\in V^1$ and $b^2\in V^2$  and (for directed graphs) also one between $a^2\in V^2$ and $b^1\in V^1$. These edges get weight $D[a,b]$ and correspond to the two large portions of the minimum cycle. Further, for every edge $(a,b)$ in $G$, we add an edge from $a^2\in V^2$ to $b^2\in V^2$ with weight $w(a,b)$, i.e. $V^2$ induces a copy of $G$; these edges correspond to the critical edge of the minimum cycle. In our reduction for directed graphs
we further transform $G'$ into a simple undirected tripartite graph.

 Consider  $v,x,y$ from the second bullet above.
By Lemma~\ref{lemma:middle}, $D[v,x]+w(x,y)+D[y,v]=d_C(v,x)+w(x,y)+d_C(y,v)=w(C).$
Hence $G'$ will contain $\{v^1,x^2,y^2\}$ as a triangle of weight $w(C)$.
Our reductions in the next two sections give transformations which ensure that every triangle in $G'$ corresponds to a simple cycle in $G$ and that $\{v^1,x^2,y^2\}$ is preserved as a triangle. Since the values $D[\cdot,\cdot]$ are upper bounds on the distances, $\{v^1,x^2,y^2\}$ is a minimum weight triangle in $G'$. The graph $G'$ however can have really large weights; $D[\cdot,\cdot]$ can be as large as $Mn$ in general. Thus our transformations also apply a weight reduction technique which reduces all edge weights to the interval $[-O(M),O(M)]$. This technique is different for undirected and directed graphs.

\paragraph{Finding a minimum cycle.}
Our reductions show that the minimum cycle problem can be efficiently reduced to the minimum triangle problem in a different graph with roughly the same number of nodes and weight sizes. Here we briefly
discuss how one can actually find a minimum weight triangle in an $n$-node graph $G(V,E,w)$ with integral edge weights in the interval $[-M,M]$.
With our reductions, this will give algorithms for the minimum cycle problem as well.

Let $A$ be the $n\times n$ adjacency matrix of $G$ defined as $A[i,j]=w(i,j)$ whenever $(i,j)\in E$ and $A[i,j]=\infty$ otherwise.
A well known approach to finding a minimum weight triangle mimics Itai and Rodeh's algorithm for unweighted triangle finding~\cite{Clique1}.
It first computes the distance product of $A$ with itself, $(A\star A)[i,j]=\min_k A[i,k]+A[k,j]$, to find for every pair of nodes $i,j$ the minimum length of a path with at most $2$ edges between them. Then, the weight of a minimum triangle is exactly $$\min_{i,j} A[j,i]+(A\star A)[i,j].$$
Finding the actual triangle takes only $O(n)$ time after one finds $i,j$ minimizing the above expression.
Thus the running time is dominated by the time required for computing $A\star A$.
The algorithm of Alon, Galil and Margalit~\cite{agm97} (following Yuval~\cite{Yuval}) does this in $\tilde{O}(Mn^\omega)$ time, whenever the entries of $A$ are integers in $[-M,M]$. Hence a minimum triangle, can be found in $\tilde{O}(Mn^\omega)$ time.

%
%
%

\section{Minimum weight cycle in undirected graphs with weights in $\{1,\ldots, M\}$}\label{s-undirected}
Let $G(V,E,w)$  be an undirected graph  with integral edge weights from the range $[1,M]$. In this section we show that in $\Ot(n^2\log Mn)$ time we can compute a cycle whose weight is at most twice the weight of the minimum weight cycle and a new undirected graph $G'(V',E',w')$ with integral weights from the range $[-M,M]$ whose minimum triangle if exists corresponds to the minimum weight cycle in $G$, with constant probability.
(To boost the probability of success, we actually create $O(\log n)$ graphs $G'$.)
If $G'$ does not have a triangle then the cycle that we have computed is the minimum weight cycle of $G$. We start by presenting an $\Ot(n^2)$ time algorithm that given an integer $t$ either reports a cycle of length $2t$ or computes all distances up to $t$.
The computed distances are used to form the graph $G'$.%

\paragraph{Cycle or Distance Computation.} The algorithm works in iterations and in each iteration it repeats the same procedure from a new vertex of the graph. This procedure is a simple adaptation of Dijkstra's algorithm. The input in each iteration is a source vertex $s$ and an integer value $t$. The algorithm either reports a cycle of length at most $2t$ or computes the distances from $s$ to every vertex that is within distance $t$ from $s$. Lingas and Lundell~\cite{ll09} used a similar approach in order to compute a $2$-approximation of the minimum weight cycle. Their algorithm, however, either returns a cycle of length at most $2t$ or computes the distances from $s$ to every vertex that is within distance $2t$ from $s$. This small difference between the two algorithms is crucial for our needs.
The algorithm is given in Algorithm~\ref{A-min-cycle}. The running time of the algorithm is $O(n^2 \log n)$. The algorithm repeats the procedure Cycle? $n$ times, each time with a different vertex. Every run of Cycle? takes at most $O(n \log n)$ time since it stops with the first cycle it detects.

In the next Lemma we prove an important property of the algorithm.

\begin{lemma}\label{L-bin-search}
For any integer $t$, Min-Cycle$(G,t)$  either finds a cycle of weight at most $2t$, or computes all distances of length at most $t$.
\end{lemma}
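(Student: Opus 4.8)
The plan is to reduce everything to the per-source routine \textsf{Cycle?}, since $\textrm{Min-Cycle}(G,t)$ is just $\textsf{Cycle?}(s,t)$ run once from every vertex $s$. So it suffices to prove that for each fixed source $s$, $\textsf{Cycle?}(s,t)$ either outputs a cycle of weight at most $2t$, or correctly computes $d[s,x]$ for every $x$ with $d[s,x]\le t$. Running this from all $n$ sources and using that $G$ is undirected (so $d[u,v]=d[v,u]$) then delivers every pairwise distance of length at most $t$, which is exactly what is needed to build $G'$ in the next step.

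Next I would pin down the behavior of $\textsf{Cycle?}(s,t)$: it is Dijkstra's algorithm from $s$, maintaining a shortest-path tree $T$ rooted at $s$ and extracting vertices in nondecreasing order of their final distance, with two halting conditions. (i) If the vertex about to be extracted has key larger than $t$, the run stops; at that moment every $x$ with $d[s,x]\le t$ is already settled with its correct distance, because all remaining vertices have key $>t$ and Dijkstra extracts in key order --- this is the textbook correctness of truncated Dijkstra, and it yields the ``all distances of length $\le t$'' alternative. (ii) Otherwise the run stops the first time it scans an edge $\{u,v\}$ incident to the just-settled vertex $u$ for which $v$ is already reached and $\{u,v\}$ is not the tree edge joining them; this is the ``cycle found'' alternative, and it is the only case that needs real work.

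For case (ii) the point is that, because this is the \emph{first} such edge, the subgraph explored so far is a forest --- it is the current tree $T$ plus the single chord $\{u,v\}$. Hence the tree paths $s\rightsquigarrow u$ and $s\rightsquigarrow v$ coincide only up to $z:=\mathrm{LCA}_T(u,v)$ and are otherwise vertex-disjoint, so the $z\rightsquigarrow u$ arc, the edge $\{u,v\}$, and the $v\rightsquigarrow z$ arc form a genuine \emph{simple} cycle, of weight $(d[s,u]-d[s,z])+w(u,v)+(d[s,v]-d[s,z])\le d[s,u]+d[s,v]+w(u,v)$. Since the run has not already stopped via (i), $d[s,u]\le t$; and since $v$ was reached no later than $u$, $d[s,v]\le d[s,u]\le t$. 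Bounding the chord term $w(u,v)$ and invoking the precise cycle-detection rule of \textsf{Cycle?} --- which one arranges so that a chord is ``accepted'' (and the run halts) only when the induced closed walk already has weight at most $2t$ --- then gives a reported cycle of weight at most $2t$.

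The genuine obstacle, in my view, is precisely this control of the chord $w(u,v)$: a careless ``first back edge'' rule is \emph{not} enough, since two branches of $T$ each of radius $\approx t$ joined by a single heavy edge would be reported as a cycle of weight $\approx 2t+M$, breaking the statement. One must be careful about which chord $\textsf{Cycle?}$ is allowed to commit to, and when, so that an accepted chord always closes a cycle of weight $\le 2t$ while the run still inspects only $O(n)$ edges (hence runs in $O(n\log n)$ time, keeping $\textrm{Min-Cycle}$ at $O(n^2\log n)$). That chord-selection bookkeeping is the one place where the argument is more than a rote application of Dijkstra; the truncated-Dijkstra settling-order argument of step (i) and the fundamental-cycle argument of step (ii) are routine and I would simply write them out.
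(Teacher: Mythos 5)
There is a real gap, and it sits exactly where you flag ``the genuine obstacle.'' The chord-control mechanism you say ``one arranges'' is not something to be arranged -- it is already specified in the algorithm, and the paper's proof of the lemma consists precisely of invoking it. In Cycle?$(G,s,2t)$, when a vertex $u$ is extracted, Controlled-Relax is called with threshold $t'/2=t$ and scans an edge $(u,v)$ only while $d[u]+w(u,v)\le t$. This one condition does both jobs at once: (a) every finite estimate $d[v]$ was created either as $d[s]=0$ or by such a relaxation, so all finite estimates are at most $t$; and (b) when a cycle is reported on $(u,v)$ (because $d[v]\neq\infty$ before the relaxation), the resulting closed walk has weight at most $\bigl(d[u]+w(u,v)\bigr)+d[v]\le t+t=2t$, since the chord weight is absorbed into the half ending at $u$. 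Your feared scenario of two radius-$\approx t$ branches joined by a heavy edge therefore never triggers a report at all: that heavy edge fails the relaxation test and is simply not scanned. By leaving the acceptance rule unspecified and postulating that ``an accepted chord always closes a cycle of weight $\le 2t$,'' you have assumed the statement rather than derived it from the algorithm.

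A secondary error: your claim that $d[s,v]\le d[s,u]$ ``since $v$ was reached no later than $u$'' is unjustified -- $v$ may still sit in the queue with a tentative key larger than $d[u]$, and $d[v]$ need not equal the true distance at that moment. The correct (and sufficient) fact is again only that every finite estimate is at most $t$, by the controlled relaxation. The rest of your outline is consistent with the paper: part (i) is the paper's argument that, absent a report, the procedure is Dijkstra restricted to relaxations with $d[u]+w(u,v)\le t$ and hence computes all distances of length at most $t$; and your LCA/fundamental-cycle observation is a legitimate (indeed slightly more careful) way to see that the reported closed walk contains a simple cycle of no larger weight, a point the paper's proof passes over quickly.
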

\begin{proof}
A cycle is reported when a vertex $u$ is extracted from the priority queue $Q$ and for one of its edges $(u,v)$ that is being relaxed the value of $d[v]$ before the relaxation is not infinity. As any distance and any distance estimation are at most $t$,  if a cycle is reported it must be of length at most $2t$. If a cycle is not reported, then our algorithm is almost identical to Dijkstra's algorithm. The only difference is that our algorithm relaxes an edge $(u,v)$ when $u$ is extracted from the priority queue if and only if $d[u]+w(u,v) \leq t$, while Dijkstra's algorithm relaxes all edges of $u$ with no restriction. This implies that our algorithm computes all distances that are smaller or equal $t$.
%
\end{proof}

%

\begin{table}[t]
\begin{multicols}{2}

\begin{algorithm}[H]\label{A-min-cycle}
\caption{Min-Cycle($G,t$)}
\ForEach{$s\in V$}{$C'\gets$ Cycle?($G,s,2t$)\;
\lIf{$w(C') < w(C^*)$}{$C^* \gets C'$}
}
\Return $C^*$
\end{algorithm}

\begin{algorithm}[H]\label{A-cycle?}
\caption{Cycle?($G,s,t'$)}
\lForEach{$v\in V$}{$d[v] \gets \infty$\;}
$d[s] = 0$\;
$Q \gets \{ s \}$\;
\While{$Q \neq \emptyset$}
{
$u \gets $ Extract-Min($Q$)\;
Controlled-Relax($u,t'/2$)\;
}

\end{algorithm}

\columnbreak

\begin{algorithm}[H]\label{A-control-relax}

\caption{Controlled-Relax($u,w_u$)}
$(u,v)\gets $ Extract-Min($Q_u$)\;
\While{$d[u]+w(u,v) \leq w_u$}{
    \eIf{$d[v] \neq \infty$}
    {report a cycle and stop\;}
    {Relax($u,v$)\;}
    $(u,v)\gets $ Extract-Min($Q_u$)\;
}
\end{algorithm}

\begin{figure}[H]
\centering \includegraphics[height=0.6cm]{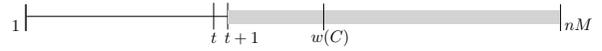}
\vspace{-9pt}\caption{In the gray area Min-Cycle reports a cycle.}\label{F-critical-value}
\end{figure}

\end{multicols}
\end{table}


\paragraph{The reduction to minimum triangle.}
Our goal is to prove Theorem~\ref{thm:undir}. The main part of the proof is describing an algorithm that computes an upper bound for the minimum weight cycle and an instance $G'$ of minimum triangle, such that either the girth of the graph is exactly the upper bound, or with constant probability the minimum triangle weight in $G'$ is the girth of $G$.
%
%
%
%
Below we only present $G'$ as having large weights. Later on, we find a value
$t$ with which we use Lemma~\ref{L-bin-search}, so that $2t$ is a bound on the minimum cycle weight that is tight within $O(M)$. As mentioned in Section~\ref{s-approach}, this value allows us to reduce the weights of $G'$ so that they fall in the interval $[-O(M),O(M)]$.

\begin{reminder}{Theorem~\ref{thm:undir}}
Let $G(V,E,w)$ be an undirected graph with $w:E\rightarrow \{1,\ldots, M\}$ and let $C$ be a minimum cycle in $G$.
There is an $O(n^2 (\log nM) \log n)$ time deterministic algorithm that computes a cycle $\hat{C}$ and
constructs $O(\log n)$ graphs $G'_1,\ldots,G'_k$ on $\Theta(n)$ nodes and edge weights in $\{1,\ldots,O(M)\}$
such that either $w(\hat{C})=w(C)$ or the minimum out of all weights of triangles in the graphs $G'_i$
is exactly $w(C)$.
\end{reminder}

The weight of the minimum cycle is an integer value from the range $[1,nM]$. From Lemma~\ref{L-bin-search} it follows that we can use algorithm Min-Cycle to perform a binary search over this range in order to find the largest value $t\in [1,nM]$ for which Min-Cycle$(G,t)$ does not report a cycle but computes all distances of length at most $t$ (see Figure~\ref{F-critical-value}). This implies that by running Min-Cycle$(G,t+1)$ we obtain a cycle of weight at most $2t+2$. Hence, we only need to show that it is possible to detect the minimum cycle in the case that its weight $w(C)$ is $2t+1$ or less.
Let us first prove some consequences of the fact that Min-Cycle$(G,t)$ does not report a cycle.

\begin{lemma}
Let $C=\{v_1,v_2,\ldots, v_\ell\}$ be a minimum cycle in $G(V,E,w)$.
Suppose that Min-Cycle$(G,t)$ does not report a cycle. There are three \textbf{distinct} vertices $v_i,v_{i+1},v_j\in C$ such that
$d_C[v_j,v_i] + w(v_i,v_{i+1}) > t$ and $d_C[v_{i+1},v_j] + w(v_i,v_{i+1}) > t$.
\label{lemma:tbound}
\end{lemma}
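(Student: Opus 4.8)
The plan is to produce the three vertices by applying the Critical Edge Lemma (Lemma~\ref{lemma:middle}) to $C$ from a carefully chosen source, and then to check the two strict inequalities by a short case split on the size of $w(C)$ relative to $t$. Recall that a minimum weight cycle is simple, so in an undirected graph $\ell:=n(C)\ge 3$ and $w(C)\ge\ell>0$, which is what we need to invoke Lemma~\ref{lemma:middle}.

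First I would fix $v_j$ to be a vertex of $C$ both of whose incident edges on $C$ have weight strictly less than $\lceil w(C)/2\rceil$. Such a vertex exists: otherwise the ``heavy'' edges of $C$ (those of weight $\ge\lceil w(C)/2\rceil$) would together touch every vertex of $C$, so there would be at least $\lceil\ell/2\rceil\ge 2$ of them, of total weight at least $2\lceil w(C)/2\rceil\ge w(C)$; since they lie on $C$ and all weights are positive this would force them to be \emph{all} of $C$, whence $w(C)\ge\ell\lceil w(C)/2\rceil\ge 3\lceil w(C)/2\rceil>w(C)$, a contradiction. Now apply Lemma~\ref{lemma:middle} to $C$ with source $v_j$: it yields an edge $(v_i,v_{i+1})$ of $C$ with $\lceil w(C)/2\rceil-w(v_i,v_{i+1})\le d_C[v_j,v_i]\le\lfloor w(C)/2\rfloor$ and $\lceil w(C)/2\rceil-w(v_i,v_{i+1})\le d_C[v_{i+1},v_j]\le\lfloor w(C)/2\rfloor$, and — since $C$ is a minimum cycle — $d[v_j,v_i]=d_C[v_j,v_i]$, $d[v_{i+1},v_j]=d_C[v_{i+1},v_j]$. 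If $v_i=v_j$ the first chain forces $w(v_i,v_{i+1})\ge\lceil w(C)/2\rceil$, and if $v_{i+1}=v_j$ the second does; both contradict the choice of $v_j$, so $v_i,v_{i+1},v_j$ are three distinct vertices of $C$ and $(v_i,v_{i+1})$ is an edge of $C$.

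It then remains to prove $d_C[v_j,v_i]+w(v_i,v_{i+1})>t$ and $d_C[v_{i+1},v_j]+w(v_i,v_{i+1})>t$. If $w(C)\ge 2t+1$ then $\lceil w(C)/2\rceil\ge t+1$ and both follow at once from the lower bounds above, using no hypothesis on Min-Cycle. So assume $w(C)\le 2t$; then $d[v_j,v_i]=d_C[v_j,v_i]\le\lfloor w(C)/2\rfloor\le t$ and likewise $d[v_j,v_{i+1}]=d_C[v_{i+1},v_j]\le t$. Suppose one of the two quantities is $\le t$; by the symmetry between $v_i$ and $v_{i+1}$ (in an undirected graph $d_C[v_j,v_i]$ and $d_C[v_{i+1},v_j]$ are the lengths of the two sub-arcs into which $v_j$ splits the arc of $C$ avoiding $(v_i,v_{i+1})$) we may assume $d_C[v_{i+1},v_j]+w(v_i,v_{i+1})\le t$. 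Since Min-Cycle$(G,t)$ reports no cycle, neither does its call Cycle?$(G,v_j,2t)$; hence by Lemma~\ref{L-bin-search} that call computes all distances at most $t$ from $v_j$ correctly and, behaving like Dijkstra, extracts from $Q$ every vertex within distance $t$ of $v_j$ — in particular $v_{i+1}$ and $v_{i-1}$ (the latter since $d[v_j,v_{i-1}]\le d_C[v_j,v_i]-w(v_{i-1},v_i)<t$, and since $d[v_j,v_i]=d_C[v_j,v_i]$ makes the arc $v_j,\dots,v_{i-1},v_i$ a shortest path, so that $d[v_j,v_{i-1}]+w(v_{i-1},v_i)=d[v_j,v_i]\le t$; also $v_{i-1}\ne v_{i+1}$ as $\ell\ge 3$). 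When $v_{i+1}$ is extracted, Controlled-Relax processes the edge $(v_{i+1},v_i)$ because $d[v_{i+1}]+w(v_{i+1},v_i)=d_C[v_{i+1},v_j]+w(v_i,v_{i+1})\le t$; when $v_{i-1}$ is extracted, Controlled-Relax processes $(v_{i-1},v_i)$ because $d[v_{i-1}]+w(v_{i-1},v_i)=d[v_j,v_i]\le t$. Whichever of $v_{i-1},v_{i+1}$ is extracted first, once it processes its edge to $v_i$ the value $d[v_i]$ is finite (it was either already finite — and then a cycle is reported — or it becomes finite via that relaxation); then when the other of $v_{i-1},v_{i+1}$ is extracted and processes its edge to $v_i$ it finds $d[v_i]$ finite, so a cycle is reported, contradicting the hypothesis. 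Hence both quantities exceed $t$.

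I expect the delicate step to be the case $w(C)\le 2t$: one must argue carefully that, because both ends of the arc through $v_j$ can reach $v_i$ within the relaxation budget $t$, the restricted search from $v_j$ is forced to close a cycle at $v_i$, and that the arc of $C$ serving as the ``second'' route to $v_i$ is genuinely explored under the ``$d[u]+w(u,v)\le t$'' relaxation rule — which is exactly where the equality $d[v_j,v_i]=d_C[v_j,v_i]$ (valid because $C$ is minimum) is used. The small counting argument pinning down $v_j$ so that the critical edge misses it is the other ingredient needing separate care.
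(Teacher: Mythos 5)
Your proof is correct and follows essentially the same route as the paper's: invoke the critical-edge lemma (Lemma~\ref{lemma:middle}) to get the triple, and then show that if either inequality failed, the controlled Dijkstra run of Cycle? from $v_j$ would be forced to relax into a vertex with a finite label and report a cycle, contradicting the hypothesis. The only differences are cosmetic: you secure distinctness by choosing $v_j$ away from heavy cycle edges via a counting argument (the paper instead re-applies the lemma from a second vertex when the critical edge touches the source), you add an explicit case split on $w(C)\le 2t$ versus $w(C)\ge 2t+1$, and you trigger the report at $v_i$ through $v_{i-1},v_{i+1}$ rather than at $v_{i+1}$ through $v_i,v_{i+2}$.
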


\begin{proof}
Let $(v_i,v_{i+1})$ be the critical edge for $v_1$ given by Lemma~\ref{lemma:middle}. Assume first that $v_1\neq v_i$ and $v_1\neq v_{i+1}$.
If either $d_C[v_1,v_i] + w(v_i,v_{i+1}) \leq t$ or $d_C[v_{i+1},v_1] + w(v_i,v_{i+1}) \leq t$ then the edge $w(v_i,v_{i+1})$ is relaxed. Assume that we are in the case that $d_C[v_1,v_i] + w(v_i,v_{i+1}) \leq t$. Then after $(v_i,v_{i+1})$ is relaxed $d[v_{i+1}]\leq t$. If
$d[v_{i+1}]<\infty$ before the relaxation of  $(v_i,v_{i+1})$  the algorithm stops and reports a cycle. If $d[v_{i+1}]=\infty$ before the relaxation of  $(v_i,v_{i+1})$  then a cycle will be detected as well but only when the edge $(v_{i+2},v_{i+1})$ is relaxed. This edge must be relaxed since $d_C[v_{i+1},v_1]\leq \lfloor w(C)/2\rfloor \leq t$ which implies that $v_{i+2}$ will be extracted and its edge $(v_{i+2},v_{i+1})$ will satisfy the relaxation requirement and will be relaxed. We conclude that if either $d_C[v_1,v_i] + w(v_i,v_{i+1}) \leq t$ or $d_C[v_{i+1},v_1] + w(v_i,v_{i+1}) \leq t$ then Min-Cycle$(G,t)$ must report a cycle, giving a contradiction.

We now turn to the case that either $v_1 = v_i$ or $v_1 = v_{i+1}$. Assume wlog that $v_1=v_i$, that is,  $(v_i,v_{i+1})=(v_1,v_2)$.
From Lemma~\ref{lemma:middle} we know that $w(v_1,v_2)\geq \lceil w(C)/2 \rceil$ and $d_C[v_2,v_1]\leq \lfloor w(C)/2 \rfloor$. We also know that there is at least one additional vertex $v_\ell$ between $v_2$ and $v_1$ on the cycle $C$. We now apply Lemma~\ref{lemma:middle} on the vertex $v_\ell$. It is easy to see that in that case the edge $(v_1,v_2)$ will be the critical edge of $v_\ell$ as well. We now have three different vertices and the rest of this case is identical to the first case.
\end{proof}



As a first attempt, we create the new graph $G'(V',E',w')$ as follows. The vertex set $V'$ contains two copies $V^1$ and $V^2$ of $V$.
For $i=1,2$, let $E^i$ be the set of edges with both endpoints in $V^i$. The set $E^1$ is empty and the set $E^2$ is $E$, that is, $(u^2,v^2)\in E^2$ if and only if $(u,v)\in E$. Let $E^{12}$ be the set of edges with one endpoint in $V^1$ and one endpoint in $V^2$.
Let $u^1 \in V^1$ and $v^2\in V^2$. If the distance between $u$ and $v$ was computed by Min-Cycle$(G,t)$ then we add an edge $(u^1,v^2)$ to $E^{12}$ with weight $d[u,v]$. We show that there is triangle in $G'(V',E',w')$ that corresponds to the minimum cycle of $G$ and has the same weight.

\begin{claim}\label{C-G'-first-try}
Let $C=\{v_1,v_2,\ldots, v_\ell\}$ be a minimum cycle in $G(V,E,w)$. Assume that $w(C)\leq 2t+1$.
There exists a triangle in $G'(V',E',w')$ on vertices of $C$ of weight $w(C)$.
\end{claim}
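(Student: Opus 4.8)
The plan is to use Lemma~\ref{lemma:tbound} to single out the three cycle vertices that will form the triangle, and then to read off from Lemma~\ref{lemma:middle} that the two ``long'' portions of $C$ joining these vertices are genuine shortest paths of length at most $t$, so that they already appear as cross edges of $G'$. Since we are in the regime where Min-Cycle$(G,t)$ does not report a cycle, Lemma~\ref{L-bin-search} guarantees that every distance of $G$ of length at most $t$ was computed, and therefore labels a cross edge of $G'$; this is the only fact about the construction of $G'$ that the argument will need.

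First I would apply Lemma~\ref{lemma:tbound} to the minimum cycle $C$ to obtain three \emph{distinct} vertices $v_i,v_{i+1},v_j$ of $C$ with
\[
d_C[v_j,v_i]+w(v_i,v_{i+1})>t \qquad\text{and}\qquad d_C[v_{i+1},v_j]+w(v_i,v_{i+1})>t .
\]
What I actually need here goes slightly beyond the statement of Lemma~\ref{lemma:tbound}: I would record from its \emph{proof} that $(v_i,v_{i+1})$ is precisely the critical edge that Lemma~\ref{lemma:middle} furnishes for the vertex $v_j$ (in the first case of that proof $v_j=v_1$, in the second $v_j=v_\ell$, and in both cases $(v_i,v_{i+1})$ is the critical edge of $v_j$).

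Next, I would invoke Lemma~\ref{lemma:middle} with $s=v_j$ and this critical edge. It gives $d_C[v_j,v_i]\le\lfloor w(C)/2\rfloor$ and $d_C[v_{i+1},v_j]\le\lfloor w(C)/2\rfloor$, and since $w(C)\le 2t+1$ both quantities are at most $t$. Because $C$ is a minimum cycle, the ``furthermore'' clause of Lemma~\ref{lemma:middle} upgrades these to $d[v_j,v_i]=d_C[v_j,v_i]$ and $d[v_{i+1},v_j]=d_C[v_{i+1},v_j]$; and since $G$ is undirected, $d[v_j,v_{i+1}]=d[v_{i+1},v_j]=d_C[v_{i+1},v_j]\le t$. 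Hence both $d[v_j,v_i]$ and $d[v_j,v_{i+1}]$ are distances of length at most $t$, so Min-Cycle$(G,t)$ computed them, and $E^{12}$ contains the edges $(v_j^1,v_i^2)$ of weight $d_C[v_j,v_i]$ and $(v_j^1,v_{i+1}^2)$ of weight $d_C[v_{i+1},v_j]$; moreover $(v_i,v_{i+1})\in E$, so $E^2$ contains $(v_i^2,v_{i+1}^2)$ of weight $w(v_i,v_{i+1})$.

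Finally, $v_j^1\in V^1$ while $v_i^2,v_{i+1}^2\in V^2$, and $v_i\ne v_{i+1}$, so $\{v_j^1,v_i^2,v_{i+1}^2\}$ is an honest triangle of $G'$, of weight
\[
d_C[v_j,v_i]+w(v_i,v_{i+1})+d_C[v_{i+1},v_j]=w(C),
\]
the last equality because the walk $v_j\to\cdots\to v_i\to v_{i+1}\to\cdots\to v_j$ traverses $C$ exactly once. I expect the only delicate point to be the bookkeeping of directions: $G'$ stores \emph{undirected} cross edges labelled by the distances Min-Cycle actually computed, and one must verify that these coincide with the along-cycle quantities $d_C[v_j,v_i]$ and $d_C[v_{i+1},v_j]$ produced by Lemma~\ref{lemma:middle}; this is exactly where minimality of $C$ (the ``furthermore'' clause) and the symmetry of undirected distances are used. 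Everything else is direct substitution.
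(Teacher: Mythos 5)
Your proof is correct and follows essentially the same route as the paper: take the three distinct vertices from Lemma~\ref{lemma:tbound}, use minimality of $C$ (the ``furthermore'' clause of Lemma~\ref{lemma:middle}) and $w(C)\leq 2t+1$ to conclude that both arc distances equal the graph distances and are at most $t$, hence were computed by Min-Cycle and appear as $E^{12}$ edges, which together with the $E^2$ edge $(v_i^2,v_{i+1}^2)$ form a triangle of weight $w(C)$. The only cosmetic difference is that you obtain the $\leq t$ bounds from Lemma~\ref{lemma:middle}'s $\lfloor w(C)/2\rfloor$ bound (which requires recalling from the proof of Lemma~\ref{lemma:tbound} that $(v_i,v_{i+1})$ is $v_j$'s critical edge), whereas the paper derives them directly from the stated inequalities of Lemma~\ref{lemma:tbound} combined with $w(C)\leq 2t+1$.
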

\begin{proof}
Without loss of generality, let $v_1$ be the vertex $v_j$ from Lemma~\ref{lemma:tbound}, and let $v_i$ and $v_{i+1}$ be the other two vertices. From Lemma~\ref{lemma:tbound} we know that all three vertices are distinct and that $d_C[v_1,v_i] + w(v_i,v_{i+1}) > t$ and $d_C[v_{i+1},v_1] + w(v_i,v_{i+1}) > t$. Combining this with the fact that $C$ is a minimum cycle and $w(C)\leq 2t+1$ we get that $d[v_1,v_i]=d_C[v_1,v_i] \leq t$ and $d[v_{i+1},v_1] = d_C[v_{i+1},v_1]  \leq t$. When Cycle? is run from $v_1$ it computes $d[v_1,v_i]$ and $d[v_1,v_{i+1}]$.
Hence, there must be a triangle of weight $w(C)$ in $G'$ on the vertices $v_1^1$,$v_i^2$ and $v_{i+1}^2$.
\end{proof}

\begin{figure}[t]\label{F-G-to-G'}
\centering \includegraphics[height=4.2cm]{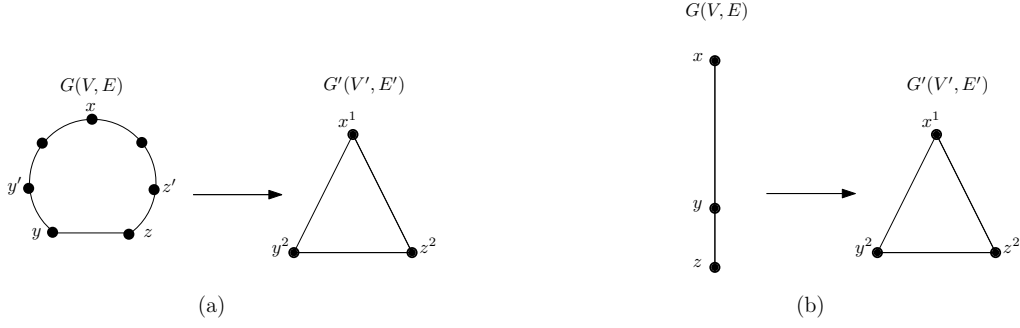}
\caption{(a) A minimum cycle in $G$ that is transformed into a triangle in $G'$. (b) A simple path in $G$ that is transformed into a triangle in $G'$.}
\end{figure}

The claim above shows only one direction, that is, if there is a minimum cycle $C$ of weight at most $2t+1$ in $G$ then there is a corresponding triangle in $G'$ on vertices $y^2,z^2\in V^2$ and $x^1\in V^1$, that correspond to vertices of $C$ with the same weight. This situation is depicted in Figure~\ref{F-G-to-G'}(a). To complete the reduction we must show that there are no false positives: triangles in $G'$ of smaller weight which do not correspond to a minimum cycle of $G$. Unfortunately, this is not the case and $G'$ may have such false triangles with smaller weight. This situation is depicted in Figure~\ref{F-G-to-G'}(b). Let $x,y,z\in V$. If there is a shortest path of length at most $t$ from $x$ to $z$ whose last edge is $(y,z)$ then there is a triangle in $G'$. To see that notice that there are two different shortest paths one from $x$ to $z$ and one from $x$ to $y$, both of length at most $t$. In such a case the graph $G'$ includes the edges $(x^1,y^2)$ and $(x^1,z^2)$ and together with the edge $(y^2,z^2)$ they form a triangle. Moreover, such a triangle has the same structure as a valid triangle and might be of smaller weight, thus, a triangle detection algorithm cannot distinguish between a valid triangle and a false triangle. In what follows we first show that this is the only situation in which a false triangle is formed and then we show a construction that avoids such false triangles.

In the above pathological case the only reason that the triangle $x^1,y^2,z^2$ did not correspond to a cycle, was because we had two different paths $P_1$ and $P_2$ that both start in the same vertex and the last vertex of one of these paths was the vertex right before the last vertex of the other path. In the next lemma we show that this is the {\em only} bad case.

\begin{lemma}
Let $x,y,z\in V$ be three distinct vertices.
Let $P_1=y\rightarrow y'\rightarrow \ldots\rightarrow x$ and $P_2=x\rightarrow \ldots\rightarrow z'\rightarrow z$ be simple shortest paths between $y$ and $x$ and $x$ and $z$ respectively. Let $y'\neq z$ and $z'\neq y$ and let $(z,y)\in E$. Then, $P_1\cup P_2\cup \{(z,y)\}$ contains a simple cycle of weight at most $w(P_1)+w(P_2)+w(z,y)$.\label{lemma:twopaths}
\end{lemma}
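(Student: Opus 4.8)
The plan is to analyze the structure of $P_1 \cup P_2 \cup \{(z,y)\}$, which is a closed walk starting and ending at $x$ (go $x \to \cdots \to z \to y \to \cdots \to x$), and extract a simple cycle from it. The total weight of this closed walk is exactly $w(P_1) + w(P_2) + w(z,y)$, so it suffices to show that some simple cycle contained in the walk is not degenerate (i.e. has at least two edges in the undirected case, or is genuinely a cycle), because any simple cycle extracted from a closed walk has weight at most the weight of the walk (weights are positive). The key point to establish is that the walk does indeed contain an edge, i.e. that $P_1$ and $P_2$ do not coincide as edge sets in a way that cancels everything out; this is where the hypotheses $y' \neq z$, $z' \neq y$, and the distinctness of $x,y,z$ come in.

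First I would set up notation: write $P_1$ as the vertex sequence $y = u_0, u_1, \dots, u_p = x$ (so $u_1 = y'$) and $P_2$ as $x = w_0, w_1, \dots, w_q = z$ (so $w_{q-1} = z'$). Consider the closed walk $W$ that traverses $P_2$ from $x$ to $z$, then the edge $(z,y)$, then $P_1$ from $y$ to $x$. Since both $P_1$ and $P_2$ are simple, any repeated vertex on $W$ must be a vertex appearing in both $P_1$ and $P_2$; let $x$ be such a vertex (it is an endpoint of both), and more generally consider the first vertex (starting from $x$ along $P_2$) that is revisited later in the walk. Standard walk-to-cycle surgery then yields a simple cycle. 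The remaining task is to rule out the only way this could fail to give a *genuine* cycle: that the closed walk collapses entirely, which would require $P_1$ to be the exact reverse of $P_2 \cup \{(z,y)\}$ or some similar cancellation.

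The main obstacle — and the reason the lemma's hypotheses are exactly what they are — is handling the case where $P_1$ and $P_2$ share many edges. I would argue as follows: if the simple cycle extracted were to have fewer than the needed number of edges (degenerate), it would have to be because $P_1$ traverses the reverse of a subpath of $P_2$ together with the edge $(z,y)$. The condition $(z,y) \in E$ together with $y' \neq z$ (the second vertex of $P_1$ is not $z$) and $z' \neq y$ (the second-to-last vertex of $P_2$ is not $y$) precisely prevents the "off-by-one" pathological overlap described in the paragraph preceding the lemma — namely, it prevents $P_1 \cup P_2$ from being two paths from $x$ differing only in whether they stop at $y$ or at $z = $ (vertex after $y$). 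Concretely, I would show: the first vertex $v$ at which $W$ returns to a previously visited vertex, when we close the loop there, gives a cycle $C'$; $C'$ uses at least one edge of $\{(z,y)\} \cup (\text{edges of } P_2 \text{ near } z) $ and at least one edge of $P_1$ near $y$, and these are distinct edges because $y' \neq z$ and $z' \neq y$ force the neighborhoods of $y$ and $z$ along the two paths to differ; hence $C'$ has at least two (in fact $\geq 3$) distinct edges and is a bona fide simple cycle. Its weight is at most $w(W) = w(P_1) + w(P_2) + w(z,y)$ since removing the non-cycle portions only removes edges of positive weight.

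A cleaner way to finish, which I would prefer to write up, is: since $P_1$ and $P_2$ are shortest paths and $x,y,z$ are distinct, consider whether $P_1 \cup P_2$ is itself a simple path from $y$ to $z$ through $x$; if so, adding $(z,y)$ closes it into a simple cycle of exactly the claimed weight and we are done. If not, $P_1$ and $P_2$ share an internal vertex; take the last vertex $a$ on $P_2$ (going from $x$ toward $z$) that also lies on $P_1$, split $P_2$ at $a$, and combine the $a \to z$ portion of $P_2$, the edge $(z,y)$, and the $y \to a$ portion of $P_1$; the hypotheses $y' \neq z$ and $z' \neq y$ guarantee $a \notin \{y, z\}$ is impossible to block this from being a cycle with at least three edges, and by construction its vertices are distinct, so it is a simple cycle; its weight is at most the total. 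The bookkeeping of which subcase forces which inequality is the only delicate part, and I expect that to be the bulk of the write-up.
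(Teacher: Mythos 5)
Your ``cleaner finish'' is correct, but it follows a genuinely different route from the paper's proof. The paper reverses $P_1$ so that both paths emanate from $x$, takes the longest common prefix of $P_1^{-1}$ and $P_2$, and splits into three cases: one path is a subpath of the other (close a cycle with $(z,y)$, using $y'\neq z$); the paths diverge at some $x'$ and later re-intersect at a first vertex $q$ (the two internally disjoint $x'$--$q$ subpaths already form a simple cycle, not even using the edge $(z,y)$); or they diverge and never re-meet (close with $(z,y)$). You instead take $a$, the last vertex of $P_2$ (walking from $x$ toward $z$) that lies anywhere on $P_1$, and always close the single cycle $P_1[y..a]\cup P_2[a..z]\cup\{(z,y)\}$ through the edge $(z,y)$: by the choice of $a$, every vertex of $P_2$ strictly after $a$ avoids $P_1$, so the two subpaths meet only at $a$ and the cycle is vertex-simple, and positivity of the edge weights gives the weight bound. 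This buys you a uniform construction with no case analysis on how the paths overlap, at the price of explicitly checking the degenerate positions of $a$, which your write-up leaves implicit (and the sentence claiming the hypotheses ``guarantee $a\notin\{y,z\}$ is impossible to block this'' is garbled): if $a=y$ you need $z'\neq y$ so that $P_2[y..z]$ has at least two edges; if $a=z$ you need $y'\neq z$ so that $P_1[y..z]$ has at least two edges; and in all cases $y'\neq z$ and $z'\neq y$ ensure $(z,y)$ is not itself an edge of $P_1$ or $P_2$, so the resulting cycle has at least three distinct vertices and edges. Your opening ``walk-to-cycle surgery'' paragraph is only a sketch and would need the same care, but once these checks are spelled out your final construction is a complete and arguably shorter argument than the paper's three-case proof.
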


\begin{proof}
Let $P^{-1}_1$ be $P_1$ with its edges reversed.
Look at $P^{-1}_1$ and $P_2$.
There are two options. Either one path is a subpath of the other, or there is a node $x'$ such that $x\rightarrow \ldots\rightarrow x'$ is a subpath of both, and $x'$ is followed by $q_1$ in $P^{-1}_1$ and by $q_2\neq q_1$ in $P_2$.

Consider the first case. Wlog, $P_2$ is a subpath of $P^{-1}_1$ (the other inclusion is symmetric).
Since $y'\neq z$, the subpath between $y$ and $z$ on $P_1$ has at least 2 edges, and adding edge $(z,y)$ produces a simple cycle of weight less than the sum of the two original path weights.

Consider the second case when $x',q_1,q_2$ exist as above.
If there is some node between $x'$ and $y$ on $P^{-1}_1$ which also appears in $P_2$ after $x'$, then let $q$ be the first such node. Then no node on $P_2$ between $x'$ and $q$ appears between $x'$ and $q$ in $P^{-1}_1$. The two disjoint simple paths between $x'$ and $q$ form a simple cycle on at least $3$ nodes since $q_1\neq q_2$. The weight of this cycle is less than the sum of the two original path weights.

Finally, suppose no such $q$ exists. Then the subpaths of $P^{-1}_1$ and $P_2$ between $x'$ and $y$ and $x'$ and $z$ share no vertices and hence adding edge $(z,y)$ closes a simple cycle of weight at most $w(P_1)+w(P_2)+w(z,y)$.
\end{proof}

Lemma~\ref{lemma:twopaths} implies that our reduction to minimum triangle will work, provided that we can ensure that for every triangle $x^1,y^2,z^2$ in $G'$, the last node $z'$ before $z$ on the shortest path from $x$ to $z$ in $G$ is distinct from $y$.
To do this, we use the color-coding method from the seminal paper of Alon, Yuster and Zwick~\cite{AYZ97}.
The idea is as follows. Let $\{C_1,C_2\}$ be two distinct colors and suppose we assign to every node of $G$ one of these colors independently and uniformly at random.
Fix four vertices $y,y',z,z'$. The probability that $\color(y')=\color(z')=C_1$ and $\color(y)=\color(z)=C_2$ is $1/2^4=O(1)$.

Now we will modify $G'(V',E',w')$ from before. Recall that $V'=V^1\cup V^2$.
For every node $x$ of $G$ we add a copy $x^1$ to $V^1$, so that $V^1$ is a copy of $V$.
Furthermore, if color$(x)=C_2$ we also add a copy $x^2$ to $V^2$.
We now define the set of edges $E'$. Let $E^{ij}$ be the set of edges between $V^i$ and $V^j$, for $i,j\in \{1,2\}$.
The edge set $E^{11}$ is empty, so $E' = E^{12} \cup E^{22}$.
Let $x,z',z\in V$ such that $(z',z)$ is the last edge of the shortest path from $x$ to $z$. The sets $E^{12}$ and $E^{22}$ are defined as follows:
$$ E^{12} = \{ (x^1,z^2) \mid \color(z')=C_1 \wedge \color(z)=C_2\} \;\;\;\;\;\;\;\;\;\;\;\; E^{22} = \{ (x^2,z^2) \mid \color(x)=C_2 \wedge \color(z)=C_2\}.$$

The weight of an edge $(x^1,z^2)\in E^{12}$ is $d[x,z]$.
The weight of an edge $(x^2,z^2)\in E^{22}$ is $w(x,z)$.
We now prove that $G'$ does not contain false triangles.

\begin{lemma}\label{L-G'-has-no-false-triangle}
If $T=\{x,y,z\}$ is triangle in $G'$ then there exists a simple cycle $C$ in $G$ such that $\{x,y,z\}\subseteq C$ and $w(C) \leq w(T)$.
\end{lemma}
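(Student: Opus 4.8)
The plan is to take a triangle $T = \{x^1, y^2, z^2\}$ in $G'$ (the vertex in $V^1$ must exist since $E^{11}$ is empty) and exhibit a simple cycle $C$ in $G$ through $x, y, z$ with $w(C) \le w(T)$. First I would unpack what the three edges of $T$ mean. The edge $(x^1, z^2)\in E^{12}$ has weight $d[x,z]$ and encodes a shortest path $P_2$ from $x$ to $z$ whose last edge $(z', z)$ satisfies $\color(z') = C_1$, $\color(z) = C_2$. The edge $(x^1, y^2)\in E^{12}$ has weight $d[x,y]$ and encodes a shortest path $P_1$ from $x$ to $y$ whose last edge $(y', y)$ satisfies $\color(y') = C_1$, $\color(y) = C_2$. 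The edge $(y^2, z^2)\in E^{22}$ has weight $w(y,z)$ and is just the graph edge $(y,z)\in E$ (both endpoints colored $C_2$). So $w(T) = d[x,y] + d[x,z] + w(y,z)$, and I want a simple cycle of weight at most this.

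Next I would set up to apply Lemma~\ref{lemma:twopaths} with the roles $P_1$ ending at $y$ (reversed orientation, i.e. $y \to y' \to \cdots \to x$), $P_2$ going $x \to \cdots \to z' \to z$, and the closing edge $(z,y)\in E$ (which is $(y,z)$, undirected). Lemma~\ref{lemma:twopaths} requires the three hypotheses: $x,y,z$ distinct; $(z,y)\in E$; and crucially $y' \ne z$ and $z' \ne y$. The first two I would argue directly — $T$ is a triangle so $x,y,z$ are distinct (and $x^1 \ne y^2, z^2$ trivially since they lie in different parts, while $y\ne z$ since $(y^2,z^2)$ is an edge of the simple graph), and $(y,z)\in E$ by construction of $E^{22}$. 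The third pair of conditions is exactly where the color-coding pays off: since $\color(z) = C_2$ but $\color(y') = C_1$, we have $y' \ne y$ automatically, but I need $y' \ne z$ — this follows because $\color(y') = C_1 \ne C_2 = \color(z)$. Symmetrically $z' \ne y$ because $\color(z') = C_1 \ne C_2 = \color(y)$. So the coloring guarantees the non-degeneracy hypotheses of Lemma~\ref{lemma:twopaths} are met.

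With the hypotheses verified, Lemma~\ref{lemma:twopaths} directly produces a simple cycle $C$ inside $P_1 \cup P_2 \cup \{(z,y)\}$ of weight at most $w(P_1) + w(P_2) + w(z,y) = d[x,y] + d[x,z] + w(y,z) = w(T)$. I would then note that $C$ contains all three of $x, y, z$: the edge $(z,y)$ forces $y$ and $z$ onto $C$, and one should check $x\in C$ as well — this is essentially built into the proof of Lemma~\ref{lemma:twopaths}, since in each of its cases the cycle is formed either using the full subpath through $x$ or is closed at a meeting point that still lies between $x$ and the endpoints; if the cited lemma's statement does not literally assert $x\in C$ I would add a sentence re-deriving it from its proof structure (the cycle always passes through the common prefix of $P_1^{-1}$ and $P_2$, hence through $x$). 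The one subtle point — and the main thing to be careful about rather than a genuine obstacle — is making sure the correspondence between the edge $(x^1,z^2)$ and "a shortest path from $x$ to $z$ whose penultimate vertex is $z'$" is used consistently: $E^{12}$ records, for each pair $(x,z)$ whose distance was computed, the specific edge $(z',z)$ that is the last edge of the shortest-path tree from $x$, so $z'$ is well-defined and the color condition on it is meaningful. Once that bookkeeping is pinned down, the lemma is immediate from Lemma~\ref{lemma:twopaths}.
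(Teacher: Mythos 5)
Your main argument coincides with the paper's: unpack the two $E^{12}$ edges as shortest paths $P_1$ (from $x$ to $y$, with last vertex $y'$ before $y$) and $P_2$ (from $x$ to $z$, with last vertex $z'$ before $z$), use the colors forced by membership in $E^{12}$ and $E^{22}$ to conclude $y'\neq z$ and $z'\neq y$, and invoke Lemma~\ref{lemma:twopaths} with the closing edge $(y,z)\in E$ to obtain a simple cycle of weight at most $d[x,y]+w(y,z)+d[x,z]=w(T)$. That core is correct and is exactly the paper's proof.

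Two points, however. First, your opening claim that the triangle must have a vertex in $V^1$ ``since $E^{11}$ is empty'' is false: $E^{22}$ is a copy of $E$ restricted to $C_2$-colored endpoints, so any triangle of $G$ all of whose vertices are colored $C_2$ appears as a triangle lying entirely inside $V^2$; emptiness of $E^{11}$ only excludes triangles with two or three vertices in $V^1$. The omitted case is trivial (such a triangle is itself a simple cycle of the same weight in $G$, which is how the paper dismisses it in one sentence), but your case analysis as written skips a case that genuinely occurs. Second, regarding the clause $\{x,y,z\}\subseteq C$: you are right to notice that Lemma~\ref{lemma:twopaths} does not assert it, but your proposed repair is not accurate --- in the second case of that lemma's proof the cycle is closed between the branch point $x'$ and a meeting node $q$, and need not pass through $x$, $y$, or $z$ at all. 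The paper's own proof likewise establishes only the weight bound, which is all the reduction ever uses; so you should treat the containment clause as looseness in the statement rather than claim to derive it from the structure of Lemma~\ref{lemma:twopaths}.
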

\begin{proof}
Any triangle in $G'$ either have one vertex from $V^1$ and two vertices from $V^2$ or all three vertices from $V^2$. In the latter case the triangle is also in $G$ so we focus in the former case, that is, $T=\{x^1,y^2,z^2\}$ is a triangle in $G'$ such that $x^1\in V^1$ and $y^2,z^2\in V^2$. Let $x,y,z\in V$ be the vertices that correspond to $x^1,y^2$ and $z^2$ in $G$. Let $y'$ ($z'$) be the last vertex before $y$ ($z$) on the shortest path $P_1$ ($P_2$) between $x$ and $y$ ($z$) in $G$. The fact that $(x^1,y^2)\in E^{12}$ and $(x^1,z^2)\in E^{12}$ implies that $\color(y')=\color(z')=C_1$ and $\color(y)=\color(z)=C_2$. Hence we get that $y'\neq z$ and $z'\neq y$.
%
Combining this with the fact that $E^{22}\subseteq E$ we get that the paths $P_1$, $P_2$ and the edge $(y,z)$ satisfy the requirements of Lemma~\ref{lemma:twopaths}, and there is a simple cycle of weight at most $w(P_1)+w(P_2)+w(y,z)=w(T)$ in $G$.
\end{proof}

%
Now that we have shown that $G'$ does not contain false triangles we prove that the minimum weight cycle in $G$ corresponds to a triangle in $G'$.  (This can be viewed as proving Claim~\ref{C-G'-first-try} for the new construction of $G'$).


\begin{claim}\label{C-G'-second-try}
Let $C=\{v_1,v_2,\ldots, v_\ell\}$ be a minimum cycle in $G(V,E,w)$. Assume that $w(C)\leq 2t+1$.
Then there exists a triangle in $G'(V',E',w')$ on vertices of $C$ of weight $w(C)$, with constant probability.
\end{claim}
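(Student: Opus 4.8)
The plan is to re-run the argument of Claim~\ref{C-G'-first-try} for the new color-coded construction of $G'$, carefully keeping track of which nodes' colors are constrained, and to show that the required color pattern occurs with constant probability. First I would fix a minimum cycle $C$ with $w(C)\le 2t+1$ and apply Lemma~\ref{lemma:tbound} to obtain three distinct vertices; relabeling, call them $v_1$ (the vertex $v_j$ of Lemma~\ref{lemma:tbound}), $v_i$, and $v_{i+1}$, where $(v_i,v_{i+1})$ is an edge of $C$ and $d_C[v_1,v_i]+w(v_i,v_{i+1})>t$ and $d_C[v_{i+1},v_1]+w(v_i,v_{i+1})>t$. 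Exactly as in Claim~\ref{C-G'-first-try}, combining these strict inequalities with $w(C)\le 2t+1$ and integrality of the weights gives $d_C[v_1,v_i]\le t$ and $d_C[v_{i+1},v_1]\le t$; and the minimality of $C$ (the ``furthermore'' part of Lemma~\ref{lemma:middle}), together with symmetry of undirected distances, gives $d[v_1,v_i]=d_C[v_1,v_i]$ and $d[v_1,v_{i+1}]=d[v_{i+1},v_1]=d_C[v_{i+1},v_1]$. Since Min-Cycle$(G,t)$ reports no cycle, by Lemma~\ref{L-bin-search} the run of Cycle? from $v_1$ computes both of these distances, so the construction has a recorded shortest $v_1\to v_i$ path and a recorded shortest $v_1\to v_{i+1}$ path; let $a$ and $b$ be the respective vertices just before $v_i$ and $v_{i+1}$ on those paths.

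Next I would isolate the colorings that force the triangle $\{v_1^1,v_i^2,v_{i+1}^2\}$ into $G'$. By the definitions of $V^2$, $E^{22}$, and $E^{12}$ it suffices that $\color(v_i)=\color(v_{i+1})=C_2$ (so $v_i^2,v_{i+1}^2\in V^2$ and $(v_i^2,v_{i+1}^2)\in E^{22}$, using $(v_i,v_{i+1})\in E$) and $\color(a)=\color(b)=C_1$ (so $(v_1^1,v_i^2),(v_1^1,v_{i+1}^2)\in E^{12}$); the color of $v_1$ is irrelevant since $v_1^1\in V^1$ unconditionally. For these events to be simultaneously possible, hence to co-occur with constant probability, the sets $\{a,b\}$ and $\{v_i,v_{i+1}\}$ must be disjoint. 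The containments $a\neq v_i$ and $b\neq v_{i+1}$ are automatic (a shortest path has no self-loop, and $v_1\notin\{v_i,v_{i+1}\}$), so the crux is ruling out $a=v_{i+1}$ and $b=v_i$. If $a=v_{i+1}$ then $d[v_1,v_i]=d[v_1,v_{i+1}]+w(v_i,v_{i+1})$; substituting the identities above gives $w(C)=2\bigl(d_C[v_{i+1},v_1]+w(v_i,v_{i+1})\bigr)$, and then $w(C)\le 2t+1$ with the left side an even integer forces $d_C[v_{i+1},v_1]+w(v_i,v_{i+1})\le t$, contradicting Lemma~\ref{lemma:tbound}; the case $b=v_i$ is symmetric. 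Since every node of $G$ is colored independently and uniformly from $\{C_1,C_2\}$, the probability that $\color(v_i)=\color(v_{i+1})=C_2$ and $\color(a)=\color(b)=C_1$ is at least $1/2^4=1/16$, a constant (which is exactly why Theorem~\ref{thm:undir} builds $O(\log n)$ independently colored copies $G'_1,\dots,G'_k$).

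Finally, conditioned on such a coloring, I would check the weight of the triangle: its edges have weights $d[v_1,v_i]$, $d[v_1,v_{i+1}]$, and $w(v_i,v_{i+1})$, which sum to $d_C[v_1,v_i]+d_C[v_{i+1},v_1]+w(v_i,v_{i+1})=w(C)$, since these three pieces partition the edge set of $C$. The main obstacle in this proof is the short case analysis establishing $\{a,b\}\cap\{v_i,v_{i+1}\}=\emptyset$: it is there that the \emph{strict} inequalities of Lemma~\ref{lemma:tbound}, the assumption $w(C)\le 2t+1$, and the integrality of the weights must be combined; everything else is the argument of Claim~\ref{C-G'-first-try} with the extra bookkeeping introduced by the color classes.
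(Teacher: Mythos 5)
Your proof is correct and follows essentially the same route as the paper's: Lemma~\ref{lemma:tbound} supplies three distinct cycle vertices whose two long cycle portions are computed shortest paths of length at most $t$, a four-vertex coloring event of probability at least $1/16$ forces the triangle $\{v_1^1,v_i^2,v_{i+1}^2\}$ into $G'$, and its weight telescopes to $w(C)$. The only divergence is that the paper conditions on the colors of the cycle neighbors $v_{i-1},v_{i+2}$ (implicitly taking the recorded shortest paths to be the cycle portions), whereas you condition on the recorded shortest-path predecessors $a,b$ and add the parity/integrality argument ruling out $a=v_{i+1}$ or $b=v_i$ --- a slightly more careful bookkeeping of the same idea rather than a different approach.
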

\begin{proof}
Without loss of generality, let $v_1$ be the vertex $v_j$ from Lemma~\ref{lemma:tbound}, and let $v_i$ and $v_{i+1}$ be the other two vertices. As in the proof of Claim~\ref{C-G'-first-try}, $d[v_1,v_i]=d_C[v_1,v_i]\leq t$ and $d[v_{i+1},v_1]=d_C[v_{i+1},v_1]\leq t$ and these values are computed by  Min-Cycle$(G,t)$ .
The random coloring is successful when $\color(v_{i-1})=C_1$, $\color(v_i)=C_2$, $\color(v_{i+1})=C_2$ and $\color(v_{i+2})=C_1$.
The probability that this happens is $1/2^4 = O(1)$.
The triangle $\{v^1_1,v^2_i,v^2_{i+1}\}$ is in $G'$ exactly when the coloring is successful, and hence $C$ is represented by that triangle in $G'$ with constant probability. The weight of the triangle  $\{v^1_1,v^2_i,v^2_{i+1}\}$ is $d[v_1,v_i]+w(v_i,v_{i+1})+d[v_{i+1}, v_1]= w(C)$.
\end{proof}

\paragraph{Weight reduction.} Currently, the maximum edge weight in $G'$ can be as large as $\Omega(nM)$ as the weights of edges in $E^{12}$ are distances in $G$. To complete the reduction, we show that it is possible to reweight the edges of $G'$ without changing the minimum triangle so that the edge weights will be integers from the range $[-M,M]$.


The key idea is to use Lemma~\ref{lemma:tbound} in two different ways. As we previously mentioned, Lemma~\ref{lemma:tbound} implies that
$d_C[v_j,v_i] \leq t$ and $d_C[v_{i+1},v_j] \leq t$. Moreover, the bounds $d_C[v_j,v_i] + w(v_i,v_{i+1}) > t$ and $d_C[v_{i+1},v_j] + w(v_i,v_{i+1}) > t$ imply that $d_C[v_j,v_i] > t - M$ and $d_C[v_{i+1},v_j] > t - M$. Thus, we can remove from $E^{12}$ every edge of weight strictly more than $t$ and every edge of weight $t-M$ or smaller with no effect on the minimum triangle in $G'$. We now decrease the weights of all the edges that were left in $E^{12}$ by $t$. The weight of every triangle in $G'$ with a node from $V^1$ has decreased by exactly $2t$.
Hence, the minimum triangle out of those with a node in $V^1$ remains the same. The weights of edges in $E^{12}$ are now integers from the interval $[-M,0]$, and the rest of the edge weights are still in $[1,M]$. If the minimum weight triangle in $G'$ now has nodes only from $V^2$, then this triangle was also the minimum weight one in $G'$ before the reweighting, and hence corresponds to a minimum weight cycle, with high probability.
Otherwise, the minimum weight triangle in $G'$ has a node from $V^1$. The minimum out of these triangles was also the minimum one among the triangles with a node in $V^1$ also before the reweighting. Hence it also corresponds to a minimum weight cycle, with high probability.
This completes the description of our construction.

\paragraph{Derandomization.} The reduction can be made deterministic, just as in the color-coding paper of Alon \etal~\cite{AYZ97}, by using a $k$-perfect hash family, a family $F=\{f_1,\ldots,f_{|F|}\}$ of hash functions from $\{1,\ldots, n\}$ to $\{1,\ldots,k\}$ so that for every $V'\subset V$ with $|V'|=k$, there exists some $i$ so that $f_i$ maps the elements of $V'$ to distinct colors. In our case, $k=2$.
By enumerating through the functions of $F$, and using each $f_i$ in place of the random coloring, our reduction runs in $O(n^2(\log nM)\log n + |F| n^2)$ time, provided each $f_i$ can be evaluated in constant time. Our reduction produces $O(|F|)$ instances of minimum triangle.

Schmidt and Siegel~\cite{schmidtsiegel} (following Fredman, Komlos and Szemeredi~\cite{fks84}) gave an explicit construction
of a $k$-perfect family in which each function is specified using $O(k)+2\log\log n$ bits.
For our case of $k=2$, the size of the family is therefore $O(\log^2 n)$. The value of each one of the hash functions on
each specified element of $V$ can be evaluated in $O(1)$ time. Alon, Yuster and Zwick~\cite{AYZ97}, reduced the size of the hash family to $O(\log n)$. Using this family we can derandomize our reduction so that it runs in deterministic $O(n^2(\log nM)\log n)$ time.

\section{Minimum cycle in directed graphs with weights in $\{-M,\ldots, M\}$}\label{s-directed}

In this section we consider directed graphs graphs with possibly negative weights but no negative cycles.
In contrast to the situation in undirected graphs it is relatively easy to reduce the minimum cycle problem in directed graphs to the problem of computing all pairs shortest paths.
If $D$ is the distance matrix of a directed graph then its minimum cycle has weight $\min_{i,j} D[i,j]+w(j,i)$. Hence, using Zwick's APSP algorithm~\cite{zwickbridge} we can compute the minimum cycle in $O(M^{0.681}n^{2.575})$ time.
In this section we show that the minimum cycle problem in directed
 graphs can be reduced to the problem of finding a minimum triangle in an undirected graph. This also implies that the minimum weight cycle in directed
 graphs can be computed in $\tilde{O}(Mn^\omega)$ time.

Similarly to before, our approach will be to compute upper bounds on the distances in the graph so that for some node $s$ on the minimum cycle $C$ and its critical edge $(v_i,v_{i+1})$ we obtain the exact distances $d[s,v_i]=d_C[s,v_i]$ and $d[v_{i+1},s]=d_C[v_{i+1},s]$.

\paragraph{Computing cycle distances.}
The Dijkstra-like approach in the previous section does not work for directed graphs.
It also only applies when the edge weights are nonnegative.
Here we utilize a new approach that allows us to reduce the minimum cycle problem in
directed
graphs with integral weights in the interval $[-M,M]$ to the minimum triangle problem in undirected graphs with weights in $[-M,M]$. Our result is  more general than before. However this comes at a cost: the reduction no longer takes nearly quadratic time, but consumes $\Ot(Mn^\omega)$ time.

Our approach uses the fact that Lemma~\ref{lemma:middle} applies for \emph{every} vertex of a cycle, together with a result by
Yuster and Zwick~\cite{yusterzwick05} obtained by a clever modification of Zwick's APSP algorithm~\cite{zwickbridge} given in Theorem~\ref{thm:yz} below.

\begin{theorem}[Yuster and Zwick '05]\label{thm:yz}
Given an $n$-node directed graph with integral edge weights in the interval $[-M,M]$, in $\tilde{O}(Mn^\omega)$ time one can compute an $n\times n$ matrix $D$ such that the $i,j$ entry of the distance product $D\star D$ contains the distance between $i$ and $j$.
\end{theorem}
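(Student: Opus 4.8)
The plan is to reduce the task to computing a min-plus ``square root'' of the distance matrix, and then to produce that square root with the random-sampling shortest-path machinery of Zwick~\cite{zwickbridge}, but stopping short of the point at which that computation becomes as costly as a full APSP.

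Let $A$ be the weighted adjacency matrix of $G$ with $A[i,i]$ set to $0$, and for $k\geq 1$ let $A^{(k)}$ be its $k$-th power under the distance product $\star$. Since $G$ has no negative cycle, every shortest path uses at most $n-1$ edges, so $A^{(k)}$ equals the true distance matrix $\delta$ for all $k\geq n-1$; in particular, for $t=\lceil(n-1)/2\rceil$ we get $A^{(t)}\star A^{(t)}=A^{(2t)}=\delta$. Hence it would suffice to compute $D=A^{(t)}$, the matrix of distances realized by paths with at most $t$ edges. In fact we only need a matrix $D$ that (i) satisfies $D[i,j]\geq\delta(i,j)$ for all $i,j$, so that $D\star D\geq\delta\star\delta=\delta$, and (ii) for every pair $i,j$ contains some vertex $k$ lying on a shortest $i$--$j$ path (hence splitting it into two shortest sub-paths) with $D[i,k]=\delta(i,k)$ and $D[k,j]=\delta(k,j)$; this already forces $(D\star D)[i,j]=\delta(i,j)$, and the extra freedom over ``$D$ is exactly $A^{(t)}$'' is what we exploit.

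To build such a $D$, first run Zwick's scaling/bridging scheme through geometric scales $s=1,\,3/2,\,(3/2)^2,\dots$: at scale $s$ maintain the matrix of distances realized by at most $s$ edges, and to pass from scale $s$ to scale $(3/2)s$ sample a uniform random hitting set $B_s$ of size $\tilde{O}(n/s)$ and replace the matrix by the entrywise minimum of itself and the rectangular distance product of its $B_s$-columns with its $B_s$-rows. Correctness is the counting fact that any shortest path of between $s$ and $(3/2)s$ edges has $\Theta(s)$ consecutive internal vertices, one of which lands in $B_s$ with high probability and splits the path into two pieces of at most $s$ edges each; each step is a rectangular distance product evaluated by fast matrix multiplication as in Alon--Galil--Margalit~\cite{agm97}. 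The key point is to stop this process at a carefully chosen scale $\tau$, before the long-path scales that make Zwick's algorithm cost $O(M^{0.681}n^{2.575})$; up to $\tau$ the total cost of the bridging steps is $\tilde{O}(Mn^\omega)$, and we then have $\delta^{<\tau}$ exactly. For pairs whose shortest path has fewer than $\tau$ edges we split at the midpoint, which leaves at most $\tau$ edges on each side, so $\delta^{<\tau}$ supplies both exact halves and (ii) holds. For pairs whose shortest path has at least $\tau$ edges, a single random hitting set $B$ of size $\tilde{O}(n/\tau)$ hits every such path with high probability, so it suffices to also store the exact distances between $B$ and all of $V$; these we obtain by running APSP on the $|B|$-node ``bridge graph'' whose edge weights are the $\delta^{<\tau}$-distances between $B$-vertices (with high probability consecutive $B$-vertices on a shortest path are fewer than $\tau$ edges apart, so this small graph already carries the correct $B$-to-$B$ distances) followed by one more rectangular product to extend to $V$. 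Setting $D[i,j]=\delta^{<\tau}(i,j)$ in general and $D[i,j]=\delta(i,j)$ whenever $i$ or $j$ lies in $B$ then satisfies (i) and (ii), so $D\star D=\delta$.

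The hard part is the running-time bookkeeping. One must choose $\tau$ (a power of $n$) and invoke sharp bounds on rectangular matrix multiplication so that the bridging phase up to scale $\tau$, the $|B|$-node APSP, and the final extension product all come in at $\tilde{O}(Mn^\omega)$ simultaneously; a naive single-hitting-set split may not admit such a $\tau$, and one may instead need a logarithmic hierarchy of hitting sets of geometrically increasing size (mirroring the scale structure above) to make the two sides balance. The genuinely delicate issue inside this is controlling the magnitudes of the matrix entries, which can grow to $\Theta(Mn)$: a direct Alon--Galil--Margalit distance product pays a factor proportional to the entry bound, so one has to work scale by scale, bucketing the weights (or splitting them bit by bit), so that the product at scale $s$ effectively pays only for entries of magnitude $O(Ms)$. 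Derandomizing the hitting sets, by substituting explicit hitting-set or bridging-set constructions as in Zwick~\cite{zwickbridge}, is routine by comparison.
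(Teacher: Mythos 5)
First, note that the paper does not prove this statement at all: it is quoted verbatim from Yuster and Zwick~\cite{yusterzwick05}, so there is no in-paper proof to match against; your proposal has to stand on its own as a reconstruction of their algorithm, and as written it has a genuine gap exactly where the theorem's content lies, namely the $\tilde{O}(Mn^\omega)$ bound. Your plan maintains the \emph{full} $n\times n$ matrix of short-path distances through the scales up to a threshold $\tau$ (entrywise min with the product of its $B_s$-columns and $B_s$-rows). With entries capped at $O(Ms)$ and $|B_s|=\tilde{O}(n/s)$, that step costs $\tilde{O}\bigl(Ms\cdot n^2\cdot (n/s)^{\omega-2}\bigr)=\tilde{O}\bigl(Mn^{\omega}s^{3-\omega}\bigr)$ per scale, so staying within $\tilde{O}(Mn^\omega)$ forces $\tau$ to be polylogarithmic. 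On the other side, your long-path phase needs exact $B$-to-$V$ distances, and the bridge-graph APSP you invoke is on $\tilde{O}(n/\tau)$ nodes with weights as large as $\Theta(M\tau)$ (or $\Theta(Mn)$ after composing), for which no $\tilde{O}(M'm^\omega)$ directed APSP exists --- that is precisely the open problem this paper discusses --- so that phase is affordable only when $\tau$ is polynomially large. No choice of $\tau$ satisfies both, and optimizing the trade-off with rectangular matrix multiplication just reproduces Zwick's $O(M^{0.681}n^{2.575})$ bound~\cite{zwickbridge}, the barrier the theorem is supposed to bypass. You sense this (``a naive single-hitting-set split may not admit such a $\tau$''), but the fix you gesture at --- ``a logarithmic hierarchy of hitting sets'' --- is left without the one idea that makes it work.

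That missing idea, which is the heart of Yuster--Zwick, is to stop maintaining all-pairs estimates beyond the first scale: at scale $\ell$ one keeps only the rows and columns indexed by $B_\ell$ (estimates $D[B_\ell,V]$ and $D[V,B_\ell]$), updating them from the previous level's $B_{\ell-1}$ rows/columns by a rectangular distance product with entries capped at $O(M(3/2)^\ell)$. Because $|B_\ell|\approx n(2/3)^\ell$ shrinks faster than the cap grows, the cost at scale $\ell$ is $\tilde{O}\bigl(Mn^{\omega}(2/3)^{\ell(\omega-2)}\bigr)$, which sums to $\tilde{O}(Mn^\omega)$ over all scales. Correctness then rests on a weaker invariant than yours: one does not need any single matrix of exact short-range distances, only that for every pair $i,j$ there is \emph{some} scale $\ell$ whose sample $B_\ell$ hits a shortest $i$--$j$ path with at most $(3/2)^\ell$ edges at some vertex $k$, whence $D[i,k]=d[i,k]$ and $D[k,j]=d[k,j]$ and $(D\star D)[i,j]=d[i,j]$; together with $D\ge d$ everywhere this is exactly the statement of Lemma~\ref{lemma:yz}. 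Your final assembly of $D$ (exact short-range entries plus exact distances through one hitting set) is a correct oracle structure in principle, but the route you give for computing it does not meet the stated running time, and since the time bound is the entire point of the theorem, this is a substantive gap rather than deferred bookkeeping.
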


The matrix $D$ can contain entries with values as large as $\Omega(Mn)$ and so $D\star D$ is not known to be computable in truly subcubic time, even when $M$ is small. Nevertheless, the theorem applies to general graphs with positive or negative weights. It also gives an $\tilde{O}(Mn^\omega)$ time algorithm for detecting a negative cycle in a graph, and
 is extremely useful in computing minimum cycles.

The Yuster-Zwick algorithm proceeds in stages. In each stage $\ell$, a node subset sample $B_\ell$ is maintained so that each node is in $B_\ell$ with probability at least $\min\{1,9(2/3)^\ell\ln n\}$.
They prove the following lemma.

\begin{lemma}[Yuster and Zwick '05]
For every stage $\ell$ and any node $s\in B_\ell$ and node $v\in V$, the algorithm has estimates $D[s,v]$ and $D[v,s]$, so that if the shortest path from $s$ to $v$ has at most $(3/2)^\ell$ edges then $D[s,v]=d[s,v]$, with high probability. Similarly, if the shortest path from $v$ to $s$ has at most $(3/2)^\ell$ edges then $D[v,s]=d[v,s]$ with high probability.\label{lemma:yz}
\end{lemma}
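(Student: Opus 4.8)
The plan is to prove the two symmetric halves of the lemma simultaneously by induction on the stage $\ell$, following the bridging-set analysis of Zwick's algorithm. Throughout I would maintain the invariant that every entry of $D$ is an upper bound on the corresponding distance, $D[x,y]\ge d[x,y]$; this is meaningful because $G$ has no negative cycle, so all distances are finite, it holds at the start since $D$ is initialized to the adjacency matrix of $G$, and it is preserved by every update, since each update replaces an entry by the length of an actual walk. For the base case $\ell=0$ we have $(3/2)^0=1$ and $B_0=V$ (the density $\min\{1,9\ln n\}$ equals $1$), so the only relevant pairs $(s,v)$ are those joined by a single edge or with $s=v$, and for those the initial value $D[s,v]=w(s,v)$ already equals $d[s,v]$; the statement for $D[v,s]$ is identical.

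For the inductive step I would fix $\ell\ge1$, a node $s\in B_\ell$, and a node $v$ admitting a shortest $s$-to-$v$ path $P=(s=u_0,u_1,\dots,u_m=v)$ with $m\le(3/2)^\ell$ edges. One may assume the samples are nested, $B_0\supseteq B_1\supseteq\cdots$ (replacing $B_\ell$ by a suitable random subset of $B_{\ell-1}$ preserves the stated per-node density), so $s\in B_{\ell-1}$. If $m\le(3/2)^{\ell-1}$, the inductive hypothesis already gives $D[s,v]=d[s,v]$ at stage $\ell-1$, and since estimates only improve this persists. So assume $(3/2)^{\ell-1}<m\le(3/2)^\ell$. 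The crucial point is that every index $k$ with $m-(3/2)^{\ell-1}\le k\le(3/2)^{\ell-1}$ splits $P$ into a shortest $s$-to-$u_k$ path and a shortest $u_k$-to-$v$ path, each with at most $(3/2)^{\ell-1}$ edges, and the number of such indices — distinct nodes of $P$, since $P$ is simple — is at least $2(3/2)^{\ell-1}-m+1\ge\tfrac12(3/2)^{\ell-1}$. Each $u_k$ lies in $B_{\ell-1}$ independently with probability at least $\min\{1,9(2/3)^{\ell-1}\ln n\}$, so the probability that none of these candidates is sampled is at most $(1-9(2/3)^{\ell-1}\ln n)^{(1/2)(3/2)^{\ell-1}}\le e^{-(9/2)\ln n}=n^{-9/2}$ (and it is $0$ when the density is $1$). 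Hence with high probability some $b:=u_k$ on $P$ lies in $B_{\ell-1}$.

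Given such a $b\in B_{\ell-1}$, I would invoke the inductive hypothesis at stage $\ell-1$: the shortest path from $b$ to $v$ has at most $(3/2)^{\ell-1}$ edges, so $D[b,v]=d[b,v]$; and the shortest path from $s$ to $b$, a prefix of $P$, has at most $(3/2)^{\ell-1}$ edges, so the backward half of the hypothesis (applied with source $b$ and target $s$) gives $D[s,b]=d[s,b]$ — both at stage $\ell-1$. The stage-$\ell$ update relaxes $D[s,v]$ against $D[s,b']+D[b',v]$ over all $b'\in B_{\ell-1}$ (that is, a distance product restricted to $B_{\ell-1}$), so $D[s,v]\le D[s,b]+D[b,v]=d[s,b]+d[b,v]=d[s,v]$, the last equality because $b$ lies on a shortest $s$-$v$ path; together with the invariant this yields $D[s,v]=d[s,v]$. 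The statement for $D[v,s]$ follows by reversing every path: split a shortest $v$-to-$s$ path at some $b\in B_{\ell-1}$ and combine $D[v,b]=d[v,b]$ with $D[b,s]=d[b,s]$ from stage $\ell-1$.

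The hard part is bookkeeping rather than new ideas. One must (i) write down exactly which restricted distance-product update is executed at each stage and recheck that it preserves $D\ge d$; (ii) handle the window of admissible split indices when $m$ is close to $(3/2)^\ell$, where the count $2(3/2)^{\ell-1}-m+1$ shrinks to about $\tfrac12(3/2)^{\ell-1}$ — this is precisely why the sampling probability is boosted to $\Theta((2/3)^{\ell-1}\ln n)$ rather than $\Theta((2/3)^{\ell}\ln n)$; and (iii) take a union bound over the $O(\log n)$ stages, the sampled sources, and the targets, so that the per-event failure probability $n^{-9/2}$ accumulates to an $o(1)$ total. The no-negative-cycle hypothesis enters only to make all distances — hence the invariant and all the inequalities above — well defined.
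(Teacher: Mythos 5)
The paper does not prove Lemma~\ref{lemma:yz}: it is imported verbatim from Yuster and Zwick~\cite{yusterzwick05}, so there is no in-paper proof to compare against. Your reconstruction --- induction over stages with the invariant $D\ge d$, nested samples $B_\ell\subseteq B_{\ell-1}$, splitting a shortest path with $(3/2)^{\ell-1}<m\le(3/2)^{\ell}$ edges at one of the at least $\frac{1}{2}(3/2)^{\ell-1}$ admissible middle vertices, a hitting argument giving per-event failure probability $n^{-9/2}$, and a union bound over stages, sources and targets --- is precisely the bridging-set analysis of Zwick's algorithm that Yuster and Zwick adapt, and it is sound. The only caveat, which you flag yourself, is that the argument presupposes the exact stage-$\ell$ update (a distance product over bridges $b\in B_{\ell-1}$ applied to rows and columns indexed by $B_\ell$) and the nestedness of the sample sets; both are indeed features of the cited algorithm, so nothing is missing beyond that bookkeeping.
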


The Yuster-Zwick algorithm also provides an additional matrix $\Pi$ of {\em predecessors} so that if $k=\Pi[i,j]$, then $k$ is the predecessor of $j$ on a simple path from $i$ to $j$ of weight $D[i,j]$. Similarly, one can obtain a matrix $\Pi'$ of {\em successors} so that if $k=\Pi'[i,j]$, then $k$ is the successor of $i$ on a simple path from $i$ to $j$ of weight $D[i,j]$.

Now, first use the algorithm to check whether the given graph has a negative cycle. If it does not, then
let $C$ be the minimum weight cycle, $w(C)\geq 0$.
Recall that $n(C)$ is the number of vertices/edges on $C$. Let $\ell$ be the minimum value so that $n(C)\leq (3/2)^\ell$. Note that then $n(C)\geq (3/2)^{\ell-1}$.
The probability that a particular node $s$ of $C$ is not in $B_\ell$ is at most $1-(2/3)^\ell(9\ln n)$. The events are independent for all $s$ in $C$, and so the probability that no node of $C$ is in $B_\ell$ is at most

$$(1-(2/3)^\ell (9\ln n))^{n(C)}\leq (1-(2/3)^\ell (9\ln n))^{(3/2)^{\ell-1}}\leq 1/n^6.$$

Thus the probability that some node $s$ of $C$ is in $B_\ell$ is $1-\poly^{-1}(n)$; furthermore by Lemma~\ref{lemma:yz} (with high probability) for all $x\in C$,
the Yuster-Zwick algorithm has computed
 $D[s,x]=d[s,x]$ and $D[x,s]=d[x,s]$, since the number of edges on the respective shortest paths are at most $n(C)\leq (3/2)^\ell$.

In particular, this means that $D[s,v_i]=d[s,v_i]$ and $D[v_{i+1},s]=d[v_{i+1},s]$ for the critical edge $(v_i,v_{i+1})$ for $s$ on $C$ as Lemma~\ref{lemma:middle} applies for every vertex of $C$.
Moreover, since $C$ is a minimum weight cycle with $w(C)\geq 0$, by Lemma~\ref{lemma:middle}, the paths between $s$ and $v_i$ and $v_{i+1}$ and $s$ on $C$ are shortest paths between $s$ and $v_i$ and $v_{i+1}$ and $s$, respectively. Hence, $d_C[s,v_{i}]=d[s,v_{i}]=D[s,v_{i}]$ and $d_C[v_{i+1},s]=d[v_{i+1},s]=D[v_{i+1},s]$, with high probability.

\paragraph{Creating the minimum triangle instance $G'$.}
$G'$ will still be undirected, but unlike the construction for undirected graphs, $G'$ will now be tripartite.
The vertex set $V'$ of $G'$ has partitions $V^1,V^2,V^3$ which are all copies of $V$.

The construction is as follows. For every directed edge $(u,v)$ of $G$, add an edge from $u^2\in V^2$ to $v^3\in V^3$ with weight $w(u,v)$.
Furthermore, for every two nodes $x,y$ so that $D[x,y]<\infty$ add an edge from $x^1\in V^1$ to $y^2\in V^2$ and one from $x^3\in V^3$ to $y^1\in V^1$, each with weight $D[x,y]$.
Hence the edges between $x^1\in V^1$ and $y^2\in V^2$ correspond to directed paths from $x$ to $y$, and the edges between $x^3\in V^3$ and $y^1\in V^1$ correspond to directed paths from $y$ to $x$. Hence any triangle in $G'$ corresponds to a directed closed walk in $G$.
However, any such closed walk must contain a simple cycle of no larger weight:
If the walk is not simple, find a closest pair of copies of a node $v$ on the walk. These copies enclose a simple cycle $C''$ in $G$. Now, either $C''$ has no larger weight than the walk, or removing it from the walk produces a smaller closed walk of negative weight, and hence $G$ contains a negative cycle, which we assumed is not the case.
Since $G'$ contains no false positives, the minimum triangle of $G'$ corresponds exactly to $C$.

\paragraph{Weight reduction.} As in the construction for undirected graphs, the maximum edge weight in $G'$ can be as large as $\Omega(nM)$. Here we give a different way to reduce them to the interval $[-M,M]$.

Let $t$ be a parameter which we will be changing.
Intuitively, our goal will be to set $t$ to something roughly $\lfloor w(C)/2\rfloor$; for our purposes, it will be sufficient for $t$ to be $\leq \lfloor w(C)/2\rfloor$.

Initially, $t=Mn$.
Now, check whether there is a triangle $a^1 \in V^1$, $b^2 \in V^2$, $c^3 \in V^3$ in $G'$ so that $D[a^1,b^2], D[c^3,a^1]\leq t$.
We run a binary search on $t$ in the interval $[0, Mn]$, until we find the smallest $t$ such that there is such a triangle.
Each search can be done using Boolean matrix product: create a matrix $A$ which is $1$ wherever $D$ is $\leq t$ and $0$ otherwise; multiply $A$ by itself and check for a triangle closed by an edge $(b^2,c^3)$, $b^2\in V^2, c^3\in V^3$.
This takes $O(n^\omega \log w(C))$ time.

Let (whp) $\{s^1,v_{i}^2,v_{i+1}^3\}$ be the triangle in $G'$ that corresponds to the minimum cycle $C$ of $G$.
Since $\{s^1,v_{i}^2,v_{i+1}^3\}$  is a valid triangle, and $d_C[s,v_{i}],d_C[v_{i+1},s]\leq \lfloor w(C)/2\rfloor$ by Lemma~\ref{lemma:middle},
then after the completion of the binary search, $t\leq \lfloor w(C)/2\rfloor$. Furthermore,
since $C$ is a minimum cycle and by the definition of $t$,
$w(C) \leq 2t+w(e)$, where $e$ is some edge in $G$, which implies that $w(C)\leq 2t+M$. Hence, $t\leq \lfloor w(C)/2\rfloor$ and $w(C)/2\leq t+M/2$.

Now, take $G'$ and remove every edge $(c^3,a^1)\in V^3\times V^1$ with $D[c^3,a^1]> t+M/2$ and every $(a^1,b^2)\in V^1\times V^2$ with $D[a^1,b^2]>t+M/2$.
If an edge has weight $\leq \lfloor w(C)/2\rfloor\leq t+M/2$, it is not removed. In particular, $(s^1,v_i^2)$ and $(v_{i+1}^3,s^1)$ are still edges, by Lemma~\ref{lemma:middle}.

Remove every $(c^3,a^1)\in V^3\times V^1$ with $D[c^3,a^1]<t-M$ and every $(a^1,b^2)\in V^1\times V^2$ with $D[a^1,b^2]<t-M$.
If an edge has weight $\geq \lfloor w(C)/2\rfloor - M\geq t-M$, then it is not removed. Hence again $(s^1,v_{i}^2)$ and $(v_{i+1}^3,s^1)$ are not removed because their weight is at least $t-M$ as follows from Lemma~\ref{lemma:middle}.

All remaining edges in $(V^3\times V^1)\cup (V^1\times V^2)$ have integral weights in $[t-M,t+M/2]$, and $C$ is still represented by the minimum triangle $\{ s^1,v_{i}^2,v_{i+1}^3\}$.
Now, for every remaining edge $(a,b)\in (V^1\times V^2) \cup  (V^3\times V^1)$, change its weight to $D[a,b]-t$.
The weights of the edges of $G'$ are now in the interval $[-M,M]$. Furthermore, since the weights of all triangles have decreased by $2t$, the minimum triangle of $G'$ is still the same. This completes the construction of $G'$.

\paragraph{Derandomization.} The only randomized part of our reduction is
our use of Yuster and Zwick's result.
Their algorithm can be derandomized, as pointed out in their paper~\cite{yusterzwick05} without affecting our use of their result.
Hence, we obtain a deterministic reduction from minimum cycle in directed
graphs to minimum triangle in undirected graphs which runs in $O(Mn^\omega\log n + n^\omega(\log Mn)\log n)$ time and does not increase the size of the graph or the edge weights by more than a constant factor.

\section{Discussion}\label{s-discuss}
We have obtained separate algorithms for minimum cycle for undirected graphs with nonnegative weights and for directed graphs with possibly negative weights. A natural question is whether one can obtain an algorithm that works for both types of graphs, or more generally for
 \emph{mixed} graphs: graphs with both directed and undirected edges. This turns out to be possible for mixed graphs with nonnegative weights. (The problem is NP-hard for mixed graphs with positive and negative weights, even when there are no negative cycles.)
 The idea for the proof of Theorem~\ref{thm:mix} below is to compute the distance estimates $D[\cdot,\cdot]$ using the approach from our reduction for directed graphs. This is possible since when the weights are nonnegative, one can reduce the shortest paths problem in undirected or mixed graphs to that in directed graphs by replacing each undirected edge $(u,v)$ by the two directed edges $(u,v)$ and $(v,u)$. Then the entire approach from the previous section applies up until the triangle instance needs to be constructed.
To construct the triangle instance, we use the color-coding technique from our minimum cycle algorithm for undirected graphs, but only on the undirected edges. The derandomization also follows from the previous two sections. More details follow.

Just as in the construction for directed graphs,
for every directed edge $(u,v)$ of $G$, add an edge from $u^2\in V^2$ to $v^3\in V^3$ with weight $w(u,v)$.
As in the construction for undirected graphs, randomly assign every node of $G$ one of two different colors $\{C_1,C_2\}$ independently uniformly at random. For every undirected edge $(u,v)$ of $G$, add an edge from $u^2\in V^2$ to $u^3\in V^3$ of weight $w(u,v)$ if and only if $\color(u)=\color(v)=C_2$.

Consider two nodes $x,y$ so that $D[x,y]<\infty$. Let $x'=\Pi[x,y]$ and $y'=\Pi'[x,y]$ be the first node after $x$ and the last node before $y$, respectively, on the path from $x$ to $y$ with weight $D[x,y]$. If $(x,x')$ is a directed edge, then add an edge from $x^3\in V^3$ to $y^1\in V^1$, just as in the directed graph construction. Similarly, if $(y',y)$ is a directed edge, then add an edge from $x^1\in V^1$ to $y^2\in V^2$. Otherwise, if $(y',y)$ is undirected,
add an edge from $x^1\in V^1$ to $y^2\in V^2$ only if $\color(y')=C_1$ and $\color(y)=C_2$. If $(x,x')$ is undirected, add an edge from $x^3\in V^3$ to $y^1\in V^1$ only if $\color(x')=C_1$ and $\color(x)=C_2$. The weights of these edges are all $D[x,y]$.

By Lemma~\ref{lemma:twopaths} (which also applies to mixed graphs) now every triangle in $G'$ corresponds to a simple cycle of no larger weight. Furthermore, the minimum cycle of $G$ is represented by a triangle of the same weight in $G'$ with constant probability. This follows from the color-coding and by the fact that for some node $s$ of $C$ and its middle edge $(v_{i},v_{i+1})$, $D[s,v_{i}]=d(s,v_{i})=d_C(s,v_{i})$ and $D[v_{i+1},s]=d(v_{i+1},s)=d_C(v_{i+1},s)$.

\begin{theorem}
Let $G(V,E,w)$ be a mixed graph on $n$ nodes, $w:E\rightarrow \{1,\ldots, M\}$.
In $\tilde{O}(Mn^\omega)$
time one can
construct $O(\log n)$ graphs $G'_1,\ldots,G'_k$ on $\Theta(n)$ nodes and edge weights in $\{1,\ldots,O(M)\}$ so that
the minimum out of all weights of triangles in the graphs $G'_i$
  is exactly the weighted girth of $G$.
\label{thm:mix}
\end{theorem}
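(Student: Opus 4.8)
The plan is to combine the machinery already built for directed and for undirected graphs, which becomes possible because nonnegative weights let us reduce shortest paths in a mixed graph to shortest paths in a directed graph: replace each undirected edge $\{u,v\}$ by the two directed edges $(u,v)$ and $(v,u)$, and note that since all weights lie in $\{1,\ldots,M\}$ no shortest path ever profits from traversing such a pair. First I would run the Yuster--Zwick algorithm (Theorem~\ref{thm:yz}) on this directed version to obtain, in $\tilde{O}(Mn^\omega)$ time, the matrix $D$ together with the predecessor and successor matrices $\Pi,\Pi'$; since the weights are positive there is no negative cycle and the weighted girth is a positive integer. As in Section~\ref{s-directed}, the sampling argument behind Lemma~\ref{lemma:yz} guarantees that with high probability, for the minimum cycle $C$ some node $s\in C$ has $D[s,x]=d[s,x]$ and $D[x,s]=d[x,s]$ for every $x\in C$; by the critical-edge lemma (Lemma~\ref{lemma:middle}, which applies to mixed graphs as well) the critical edge $(v_i,v_{i+1})$ of $s$ on $C$ then satisfies $D[s,v_i]=d_C[s,v_i]$ and $D[v_{i+1},s]=d_C[v_{i+1},s]$, so $D[s,v_i]+w(v_i,v_{i+1})+D[v_{i+1},s]=w(C)$.

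Next I would build the tripartite graph $G'$ on $V^1\cup V^2\cup V^3$ exactly as described above the theorem: each directed edge $(u,v)$ contributes the edge $u^2v^3$ of weight $w(u,v)$; each undirected edge $\{u,v\}$ contributes $u^2v^3$ and $v^2u^3$ of weight $w(u,v)$, but only when $\color(u)=\color(v)=C_2$ for a uniform random $2$-coloring $\color$ of $V$; and for every pair $x,y$ with $D[x,y]<\infty$ we add $x^1y^2$ and $x^3y^1$ of weight $D[x,y]$, subject to $\color(\Pi[x,y])=C_1,\ \color(y)=C_2$ (resp.\ $\color(\Pi'[x,y])=C_1,\ \color(x)=C_2$) whenever the last (resp.\ first) hop of the stored path is undirected, and unconditionally when that hop is directed. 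For the forward direction, the triangle $\{s^1,v_i^2,v_{i+1}^3\}$ is present in $G'$ precisely when the $O(1)$ relevant undirected hops receive the prescribed colors — an event of probability at least $1/16$ — after checking that the handful of nodes involved ($v_i,v_{i+1}$ and the path endpoints $\Pi[s,v_i],\Pi'[v_{i+1},s]$) are distinct in exactly the cases where a coincidence would force contradictory colour demands; this triangle has weight $w(C)$.

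For the reverse direction I would take any triangle $\{a^1,b^2,c^3\}$ of $G'$ and read it as the closed walk that runs along a shortest $a\!\to\! b$ path, then the edge $b\!\to\! c$, then a shortest $c\!\to\! a$ path, of total weight $w(T)$. When $a,b,c$ are distinct and the stored successor of $c$ (resp.\ predecessor of $b$) differs from $b$ (resp.\ $c$), Lemma~\ref{lemma:twopaths}, which holds for mixed graphs, yields a simple cycle of weight at most $w(T)$; the colour conditions are exactly what rule out the bad coincidences when the offending hops are undirected (two conflicting colour demands on one node), while when they are directed, or when $a\in\{b,c\}$, the closed walk literally repeats a node and the cycle-excision argument of Section~\ref{s-directed} applies (valid since weights are nonnegative, and since in a simple mixed graph a repeated undirected edge never survives as a simple $2$-cycle). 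Hence $\min_i(\text{min triangle of }G'_i)$ equals the girth. Finally I would apply the weight reduction of Section~\ref{s-directed} verbatim — binary-search a $t\le\lfloor w(C)/2\rfloor$, discard $V^1$-incident edges whose weight leaves $[t-M,\,t+M/2]$, subtract $t$ — which confines all weights to an interval of width $O(M)$ (a fixed shift then places them in $\{1,\ldots,O(M)\}$ and the girth is recovered from the minimum triangle weight by undoing the known offset); derandomizing via the derandomized Yuster--Zwick algorithm and a $2$-perfect hash family of size $O(\log n)$ gives the $O(\log n)$ graphs $G'_1,\ldots,G'_k$, and the total running time stays $\tilde{O}(Mn^\omega)$. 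The main obstacle is the false-positive analysis: one must verify that the colour-coding, which only constrains undirected hops, together with the ``no repeated edge'' structure of simple mixed graphs and the cycle-excision argument, genuinely covers \emph{every} triangle — including the degenerate ones where two of $a,b,c$ coincide or where a stored shortest path wraps back through a critical-edge endpoint — and, along the way, that Lemmas~\ref{lemma:middle} and~\ref{lemma:twopaths} extend cleanly from the directed and undirected cases to mixed graphs.
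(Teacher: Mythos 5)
Your proposal follows essentially the same route as the paper: reduce shortest-path computation to the directed case by doubling undirected edges, run Yuster--Zwick to get $D,\Pi,\Pi'$, build the tripartite triangle instance with color-coding imposed only on the undirected hops, invoke Lemma~\ref{lemma:twopaths} (for undirected offending hops) and the closed-walk excision argument (for directed ones) to rule out false positives, and finish with the weight reduction and derandomization from Sections~\ref{s-undirected} and~\ref{s-directed}. The construction, the success-probability bound, and the case analysis all match the paper's; if anything, you are more explicit than the paper about the degenerate cases in the reverse direction.
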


Shortest paths and minimum cycles in undirected graphs with positive and negative weights are of a completely different nature than the corresponding problems in directed graphs or in undirected graphs with nonnegative weigths.
In the absence of negative cycles, shortest paths and cycles can be solved via matching techniques (see e.g.~\cite{lawler,kortevygen,gabow83,gabow85}).
However, the running times are not as good as the corresponding ones for directed graphs. For instance, APSP can be solved in $O(\min\{n^3,mn\log n\})$ time~\cite{gabow83}, whereas the corresponding problem in directed graphs can be solved in $O(\min\{n^3\log\log^3 n/\log^2 n,mn+n^2\log n\})$~\cite{chan07j,floyd,warshall}.

None of the shortest paths algorithms for directed graphs, including Yuster-Zwick's algorithm, apply for undirected graphs when there are negative weights, as they would confuse any negative weight edge with a negative cycle. Hence our approach from Section~\ref{s-directed} would not work.
Our approach from Section~\ref{s-undirected} also fails, even if we have already computed APSP in the graph. The main reason is that Lemma~\ref{lemma:twopaths} does not apply when the weights can be negative, and hence the color-coding technique cannot be applied, as is. Computing minimum cycles in undirected graphs with possibly negative weights may require entirely new techniques.

\paragraph{Extension to $k$-cycles.}
We now show that the minimum weight cycle problem in undirected and directed graphs can be reduced to the minimum $k$-cycle problem in undirected graphs for every $k\geq 4$.
In order to obtain our reduction to minimum $k$-cycle in undirected graphs with integral edge weights in $[0,O(M)]$ it suffices to provide a reduction from minimum weight triangle in an $n$-node undirected graph with weights in $[-M,M]$ to minimum $k$-cycle in a $\Theta(n)$-node undirected graph with edge weights in $[0,O(M)]$, and then to combine this reduction with our reduction from minimum weight cycle to minimum weight triangle. This proves Theorem~\ref{thm:equiv2}.

\begin{lemma} Let $k\geq 4$ be fixed.
Given an $n$-node undirected graph $G$ with integral edge weights in $[-M,M]$, one can construct in $O(n^2)$ time an undirected graph $G'$ on $\Theta(n)$ nodes and integral edge weights in $[0,6M]$ so that if $G$ has at least one triangle and the minimum triangle weight is $W$, then the minimum weight $k$-cycle in $G'$ has weight $W+15M$.
\end{lemma}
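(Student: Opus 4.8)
The plan is to stretch each triangle of $G$ into a $k$-cycle of $G'$, while keeping every weight in $\{0\}\cup[4M,6M]\subseteq[0,6M]$. Fix integers $a,b,c\ge 1$ with $a+b+c=k$, as balanced as possible (so $a,b,c\ge 2$ once $k\ge 6$). Take $k$ disjoint copies $V^1,\dots,V^k$ of $V(G)$, imagined as arranged in a cycle, and group them into three consecutive blocks $B_x=\{V^1,\dots,V^a\}$, $B_y=\{V^{a+1},\dots,V^{a+b}\}$, $B_z=\{V^{a+b+1},\dots,V^k\}$. Inside a block, connect consecutive copies by \emph{identity edges}: for every $u\in V$ and every $i$ with $V^i,V^{i+1}$ in the same block, add the edge $\{u^i,u^{i+1}\}$ of weight $0$. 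At each of the three block boundaries, add \emph{$G$-edges}: for every $\{x,y\}\in E$ add an edge of weight $w(x,y)+5M$ between the two copies of $V$ that straddle that boundary. These are all the edges of $G'$. Then $G'$ has $kn=\Theta(n)$ vertices, is built in $O(kn+k|E|)=O(n^2)$ time, and every edge weight is $0$ or lies in $[4M,6M]$, hence in $[0,6M]$.

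For the completeness direction, let $x,y,z$ be any triangle of $G$ (automatically distinct, since $G$ is simple). Route a $k$-cycle through $x^1,\dots,x^a$, then $y^{a+1},\dots,y^{a+b}$, then $z^{a+b+1},\dots,z^k$, and back to $x^1$, using the identity edges inside the three blocks and the $G$-edges for $xy$, $yz$, $zx$ at the three boundaries. These $k$ vertices lie in $k$ distinct copies, so the cycle is simple, and its weight is $(k-3)\cdot 0+(w(xy)+5M)+(w(yz)+5M)+(w(zx)+5M)=w(xy)+w(yz)+w(zx)+15M$. Minimizing over triangles, $G'$ has a $k$-cycle of weight exactly $W+15M$.

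The substance is the soundness direction: every $k$-cycle of $G'$ has weight at least $W+15M$. Label copy $V^i$ by $i\in\mathbb{Z}_k$; every edge changes the label by $\pm1\pmod k$, so a $k$-cycle has a well-defined winding number. If the winding number is $\pm1$, the cycle visits each of the $k$ copies exactly once; crossing each identity boundary once forces all copies in a block to carry one common vertex of $G$, and crossing the three $G$-boundaries once forces those three common vertices $x,y,z$ to satisfy $xy,yz,zx\in E$ --- a triangle --- so the cycle has weight $(\text{triangle weight})+15M\ge W+15M$. The hard case, and the main obstacle, is winding number $0$, i.e.\ the ``doubling-back'' $k$-cycles. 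When $k$ is odd this case is vacuous: $k$ summands each $\pm1$ cannot total $0$, so the winding is forced to $\pm1$. When $k$ is even one must exclude doubling-back $k$-cycles directly; the key tool is that crossing any identity boundary an even number $\ge 2$ of times makes two distinct copies inside some $V^i$ carry the same vertex of $G$, contradicting simplicity, and combined with the fact that (for $k\ge 6$) the three $G$-boundaries are separated from one another by identity boundaries, one argues that any doubling-back $k$-cycle must repeat some identity boundary and hence fails to be simple. The delicate residual cases are the small even $k$ with too few spare identity edges (essentially $k=4$), where the $G$-boundaries are forced to be consecutive and a bare bounce across a $G$-boundary is not killed by the simplicity argument; there one needs either a dedicated $k=4$ gadget or a finer weight bookkeeping that exploits that the offset $5M$ is the largest compatible with the $[0,6M]$ bound. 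Making this last step airtight --- i.e.\ verifying that no surviving doubling-back cycle ever undercuts $W+15M$ --- is where essentially all of the care in the proof goes.
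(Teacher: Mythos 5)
Your completeness direction, the $O(n^2)$ construction time, and the odd-$k$ parity argument are fine, but the soundness direction has a genuine gap, and the sketched repair for even $k$ is wrong. You acknowledge that the winding-number-$0$ case is unresolved for $k=4$, but the difficulty is not confined to $k=4$: the claim that ``crossing any identity boundary an even number $\geq 2$ of times makes two distinct copies inside some $V^i$ carry the same vertex of $G$'' is false, because the two crossings can be made by identity edges of two \emph{different} $G$-vertices. Concretely, for $k=6$ with blocks $\{V^1,V^2\},\{V^3,V^4\},\{V^5,V^6\}$, take $u\neq v$ and $s,a$ with $su,ua,av,vs\in E$ (even $s=a$ with $su,sv\in E$ suffices); then $s^6,u^1,u^2,a^3,v^2,v^1$ is a simple winding-$0$ $6$-cycle of $G'$ using four $G$-edges and two identity edges, of weight $w(su)+w(ua)+w(av)+w(vs)+20M$, which can be as small as $16M$, whereas your intended minimum is $W+15M$ with $W$ possibly as large as $3M$. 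So whenever $G$ has a light even closed walk (possible precisely because weights may be negative) together with a comparatively heavy minimum triangle, the minimum $k$-cycle of your $G'$ is strictly below $W+15M$ and does not correspond to a triangle. The same folding phenomenon occurs for $k=4,8,10$ with your balanced block sizes, so the construction as stated fails for even $k$; the ``finer weight bookkeeping'' you defer is not a finishing touch but the missing core of the proof, and it cannot succeed with the offset $5M$ and shared layers as chosen.

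The paper's construction avoids this issue structurally rather than by bookkeeping. It first makes the triangle instance tripartite with parts $V^1,V^2,V^3$, then subdivides only one side: each node $v\in V^1$ is replaced by its \emph{own} zero-weight path on $k-2$ nodes, with $v$'s edges to $V^2$ attached to one endpoint and its edges to $V^3$ to the other, and all non-path weights shifted by $5M$. Because the internal path nodes have degree $2$ and each path is private to a single original node, any $k$-cycle that touches a path edge must contain that entire path plus exactly one node of $V^2$ and one of $V^3$, i.e.\ it encodes a genuine triangle of weight exactly $15M$ less; and a $k$-cycle avoiding all paths consists of $k$ edges of weight at least $4M$, which the paper argues is too heavy to be the minimum. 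This privacy of the gadget is exactly what your shared identity layers lack: since every $G$-vertex has a copy in every layer, two different vertices can traverse the same block in opposite directions and fold a short even closed walk of $G$ into a $k$-cycle. To salvage your layered construction you would have to make the long sides vertex-private (which essentially reproduces the paper's pendant-path gadget) or prove that every winding-$0$ configuration is heavier than $W+15M$, which, as the examples above show, is simply not true for the parameters you fixed.
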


\begin{proof}
Without loss of generality, we can assume that the instance of minimum triangle is tripartite with partitions $V^1,V^2,V^3$.
Remove each node $v\in V^1$ and its incident edges and replace it with a path on $k-2$ nodes $P_v=\{v_1\rightarrow v_2\rightarrow\ldots\rightarrow v_{k-2}\}$ as follows.
For every original edge $(u,v)$ with $u\in V^2$, add an edge $(u,v_1)$ with weight $w(u,v)$, and for every original edge $(v,u)$ with $u\in V^3$, add an edge $(v_{k-2},u)$ with weight $w(v,u)$. Let the weights of the path edges $(v_i,v_{i+1})$ for $i\in \{1,\ldots,k-3\}$ be all $0$.
Increase the weights of all edges of $G'$ which are not on paths $P_v$ by $5M$.
This forms a weighted graph $G'$ on $O(kn)$ nodes and weights in $[0,6M]$.

Every triangle $v\in V^1,u\in V^2,z\in V^3$ of weight $W$ in the original graph has a corresponding $k-$cycle in $G'$ of weight $W+15M$.
Now consider any $k-$cycle $C$ of $G'$. If $C$ contains an edge of a path $P_v$ corresponding to a node $v\in V^1$, then it must contain the entire path since every node $v_i$ for $i\in \{2,\ldots,k-3\}$ has degree exactly $2$. Hence $C$ contains exactly $2$ other nodes which must close a cycle with $P_v$. Hence the other two nodes are from $V^2$ and $V^3$, and there is a corresponding triangle in $G$ of weight $15M$ less.

If on the other hand $C$ does not contain an edge of a path $P_v$, then it has $k$ edges of weight at least $4M$, and hence $w(C)\geq 4kM\geq 16M$ for $k\geq 4$.
Any triangle of $G$, however, corresponds to a cycle of weight $\leq 15M<16M\leq w(C)$. Hence the minimum weight $k$-cycle in $G'$ must correspond to a triangle in $G$, if $G$ contains a triangle.
\end{proof}


%
%
\begin{small}

\end{small}
\end{document}